\newtheorem{theorem}{Theorem}
\newcounter{spslist}
\newenvironment{spslist}{
  \begin{list}
  {\begin{picture}(1,1)
     \setlength{\unitlength}{0.5cm}
     \put(0,0.22){\circle*{0.2}}
    \end{picture}}
  {\usecounter{spslist}
  \setlength{\leftmargin}{1em}
  \setlength{\labelsep}{0.6em}
  \setlength{\labelwidth}{1em}
  \setlength{\topsep}{1ex}
  \setlength{\rightmargin}{0em}
  \setlength{\itemsep}{0.5ex}
  \setlength{\parsep}{0em}
  \setlength{\itemindent}{0em} }}
  {\end{list}}
\newcommand{\col}[3]{ \renewcommand{\arraystretch}{#1}
                \left[\!\! \begin{array}{c} #2 \\ #3 \end{array} \!\!\right] }
\newcommand{\mat}[5]{ \renewcommand{\arraystretch}{#1}
                    \left[\! \begin{array}{cc}
                            #2 & #3 \\
                            #4 & #5 \end{array} \!\right] }
\newcounter{geqncount}
    {\refstepcounter{equation}%
     \setcounter{geqncount}{\value{equation}}%
     \setcounter{equation}{0}%
  }%
    {\setcounter{equation}{\value{geqncount}}}
\newcommand{\aaa}{\alpha}
\newcommand{\bb}{\beta}
\newcommand{\g}{\gamma}
\newcommand{\z}{\zeta}
\newcommand{\asq}{\alpha^2}
\newcommand{\bsq}{\beta^2}
\newcommand{\gsq}{\gamma^2}
\newcommand{\ssq}{s^2}
\newcommand{\zsq}{\zeta^2}
\newcommand{\half}{{\textstyle{\frac{1}{2}}}}
\newcommand{\fourth}{{\textstyle{\frac{1}{4}}}}
\newcommand{\ox}{{\omega_X}}
\newcommand{\oc}{{\omega_C}}
\newcommand{\olp}{{\omega^0_\text{LP}}}
\newcommand{\oup}{{\omega^0_\text{UP}}}
\newcommand{\wx}{\varpi_X}
\newcommand{\wc}{\varpi_C}
\newcommand{\exciton}{{\psi_X}}
\newcommand{\photon}{{\psi_C}}
\newcommand{\excitonenv}{\phi_X}
\newcommand{\photonenv}{\phi_C}
\begin{document}

\begin{center}
{\bfseries \Large  Lossless Polariton Solitons}
\end{center}

\vspace{0ex}

\begin{center}
{\scshape \large Stavros Komineas${}^*$, Stephen P. Shipman${}^\dagger$\\ and Stephanos Venakides${}^\ddagger$} \\
\vspace{2ex}
{
\itshape
${}^*$Department of Mathematics and Applied Mathematics\\
University of Crete\\
Heraklion, Crete, Greece\\
\vspace{1.2ex}
${}^\dagger$Department of Mathematics\\
Louisiana State University\\
Baton Rouge, Louisiana \ 70803, USA\\
\vspace{1.2ex}
${}^\ddagger$Department of Mathematics\\
Duke University\\
Durham, North Carolina \ 27708, USA
}
\end{center}

\vspace{3ex}
\centerline{\parbox{0.9\textwidth}{
{\bf Abstract.}
Photons and excitons in a semiconductor microcavity interact to form exciton-polariton condensates.  These are governed by a nonlinear quantum-mechanical system involving exciton and photon wavefunctions.  We calculate all non-traveling harmonic soliton solutions for the one-dimensional lossless system.  There are two frequency bands of bright solitons when the inter-exciton interactions produce an attractive nonlinearity and two frequency bands of dark solitons when the nonlinearity is repulsive.
In addition, there are two frequency bands for which the exciton wavefunction is discontinuous at its symmetry point, where it undergoes a phase jump of $\pi$.
A band of continuous dark solitons merges with a band of discontinuous dark solitons, forming a larger band over which the soliton far-field amplitude varies from $0$ to $\infty$; the discontinuity is initiated when the operating frequency exceeds the free exciton frequency.  The far fields of the solitons in the lowest and highest frequency bands (one discontinuous and one continuous dark) are linearly unstable, whereas the other four bands have linearly stable far fields, including the merged band of dark solitons. 
}}

\vspace{3ex}
\noindent
\begin{mbox}
{\bf Key words:}
polariton, soliton, exciton, photon, nonlinear, semiconductor microcavity 
\end{mbox}
\vspace{3ex}

\hrule
\vspace{3ex}

\section{Introduction} 

Exciton-polaritons are a quantum-mechanical quasiparticle formed by the coupling of photons with excitons.  An exciton is a dipole generated in a semiconductor when an electron absorbs a photon and jumps from the valence to the conduction band thus leaving a hole in the valence band.  The electron and hole are attracted to each other by an effective electrostatic Coulomb force, resulting in the excitation of an electron-hole pair.  Exciton-polaritons can be trapped in a planar microcavity containing a semiconductor material, that is, they live in a two-dimensional quantum well.

Exciton-polaritons  can form Bose-Einstein condensates (BEC) at relatively high temperatures
\cite{CarusottoCiuti_RMP2013,DengHaug_RMP2010,KavokinBaumberg2007,KeelingMarchetti_SST2007}, sustained by continuous laser pumping of photons. The condensate wavefunctions produce a rich variety of localised quantum states in the micrometer scale: dark solitons \cite{AmoPigeon_Science2011,GrossoNardin_PRB2012,HivetFlayac_NatPhys2012,LarionavoaStolz_OptLett2008,PigeonCarusottoCiuti_PRB2011}, bright solitons \cite{EgorovGorbach_PRL2010,EgorovSkryabinYulin2009,LarionavoaStolz_OptLett2008,SichKriszanovskii_NatPhot2012}, and vortices \cite{GrossoNardin_PRL2011,MarchettiSzymanska}.
Solitons in polaritonic condensates have potential for applications in ultrafast information processing~\cite{AckemannFirthOppo_2009} due to picosecond response times and strong nonlinearities \cite{EgorovSkryabinYulin2009,SichKriszanovskii_NatPhot2012}.
See \cite{SnokeLittlewood2010}, for example, for a tour of polariton condensates.

In a mean-field approximation, the excitons and photons are described by separate wavefunctions $\exciton$ (excitons) and $\photon$ (photons) of spatial coordinates $\,\mathbf{x}=(x_1,x_2)$ and time $t$.  A continuous absorption and emission of photons by atoms in the semiconductor (Rabi oscillation) is represented by a coupling of the two equations.  The kinetic term (Laplacian) is typically neglected for the excitons due to their significantly larger mass. On the other hand, exciton-exciton interaction is significant, so a nonlinear term arises in the equation for the exciton field. The system of equations reads \cite{CarusottoCiuti2004,Deveaud2007,KavokinBaumbergMalpuech2007,MarchettiSzymanska2011,YulinEgorovLederer2008}
\begin{equation}\label{2Dpolaritons}
  i\partial_t
  \renewcommand{\arraystretch}{1.3}
\left(\hspace{-5pt}
  \begin{array}{c}
    \exciton \\ \photon
  \end{array}
\hspace{-5pt}\right)
=
  \renewcommand{\arraystretch}{1.1}
\left(\hspace{-5pt}
  \begin{array}{cc}
    \ox - i\kappa_X + g|\exciton|^2 & \gamma \\
    \gamma & \omega_C - i\kappa_C - \frac{\hbar}{2m_C}\triangle
  \end{array}
\hspace{-5pt}\right)
  \renewcommand{\arraystretch}{1.3}
\left(\hspace{-5pt}
  \begin{array}{c}
    \exciton \\ \photon
  \end{array}
\hspace{-5pt}\right)
+
  \renewcommand{\arraystretch}{1.3}
\left(\hspace{-5pt}
  \begin{array}{c}
    0 \\ F
  \end{array}
\hspace{-5pt}\right).
\end{equation}
The numbers $\omega_{X,C}$ and $\kappa_{X,C}$ are real; $\ox$ is the frequency of a free exciton, $\oc$ is the frequency of the free, zero-momentum photon; $\kappa_X$ and $\kappa_C$ are the attenuation constants of the exciton and photon and account for losses; $m_C$ is the mass associated with the photons.
The Laplacian is denoted by $\triangle=\partial_{x_1}^2+\partial_{x_2}^2$.
The forcing $F$ represents a pumping of photons into the microcavity.  The coupling associated with the Rabi oscillations enters through the frequency parameter $\gamma$, which is half the Rabi frequency.  The nonlinearity $g|\exciton|^2$ is attractive when {$g<0$} and repulsive when~{$g>0$}.

\smallskip

This work addresses the analytic foundations of exciton-polaritons, as a complement to the large body of experimental and numerical work on the subject.
We investigate polariton fields that are lossless ($\kappa_X$ and $\kappa_C$ are zero) and unforced ($F=0$).
In turning off both pumping and losses, which are due to radiation and thermalization, we focus on the synergy of exciton interaction (nonlinearity) and photon dispersion.
We consider fields that depend on only one spatial variable, say $x_1$, and we use the notation $x=x_1$ below.
Under these conditions, we discover and analyze three families solitons that exhaust all harmonic stationary (non-traveling) exciton-polariton solitons.
We use the term ``soliton'' in a broad sense to refer to a field with amplitude that tends to a constant value as $|x|\to\infty$, which is typical in the physics literature.
All solitons can be expressed exactly by quadrature through exact integration of the harmonic polariton system (see~(\ref{darksolution1})--(\ref{darksolution2})).

For each of the three families, the solitons exist on two frequency bands, all six bands being mutually disjoint (Fig.~\ref{fig:bands}).  There is one family of dark solitons for $g>0$, one family of bright solitons for $g<0$, and one family of solitons whose exciton wavefunction exhibits a spatial jump discontinuity.  The discontinuity is made physically possible by the vanishing of the photon field, which brings dispersion to the system, at the point of discontinuity of the exciton field; mathematically, these are distributional solutions of the polariton equations.

The stationary dark solitons in bands 1.1 and 3.2 were introduced in \cite{KomineasShipmanVenakides2015}.
The present work exhausts all soliton solutions and thus proves that these two bands (collectively called band D) are the only dark stationary 1D solitons with stable far-field values.
When $\oc>\ox$ (positive detuning), band 1.1 coincides with the ``lower polariton" band of linear ($g\!=\!0$) homogeneous polaritons .  The upper band 1.2 of dark solitons has the same minimal frequency as the ``upper polariton" band of homogeneous linear solitons.  The dispersion relations for the lower and upper bands of linear homogeneous polaritons are shown in \cite[Fig.~1]{MarchettiSzymanska2011}.

\begin{figure}[t]
\centerline{\scalebox{0.6}{\includegraphics{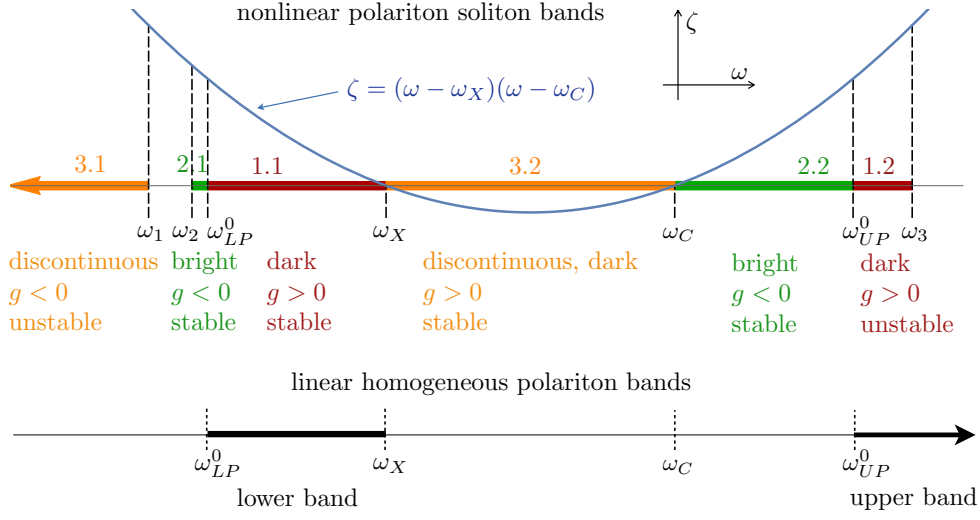}}}
\caption{\small The lossless, unforced, one-dimensional polariton equations admit six frequency bands of stationary soliton-type solutions, whose graphs are shown in Fig.~\ref{fig:field2} and in section~\ref{sec:graphs}.  Positive detuning $\oc>\ox$ is shown here; for negative detuning $\oc<\ox$, band 3.2 is absent (Fig.~\ref{fig:bandsneg}).
Solitons in bands 1.1, 1.2, and 3.2 are all dark and can coexist in a system with $g>0$.  Solitons in bands 2.1, 2.2, and 3.1 can coexist in a system with $g<0$.  The linear stability of the far-field value of the soliton as a constant-amplitude solution of the polariton equations is indicated.
Band 1.1 $= (\olp,\ox)$ coincides with the lower band of homogeneous linear solitons when $\oc>\ox$, and band 1.2 begins at the minimal value of the upper band $(\oup,\infty)$ of homogeneous linear solitons.
The endpoint frequencies of the bands are defined as follows.
Set $p(\omega):=(\omega-\ox)(\omega-\oc)$, with $\ox$ and $\oc$ defined after (\ref{2Dpolaritons}),
$p(\omega_1)=p(\omega_3)=\frac{3}{2}\gamma^2$ and $p(\omega_2)=\frac{9}{8}\gamma^2$ and $p(\olp)=p(\oup)=\gamma^2$.
}
\label{fig:bands}
\end{figure}

\begin{figure}[H]
\centerline{\scalebox{0.53}{\includegraphics{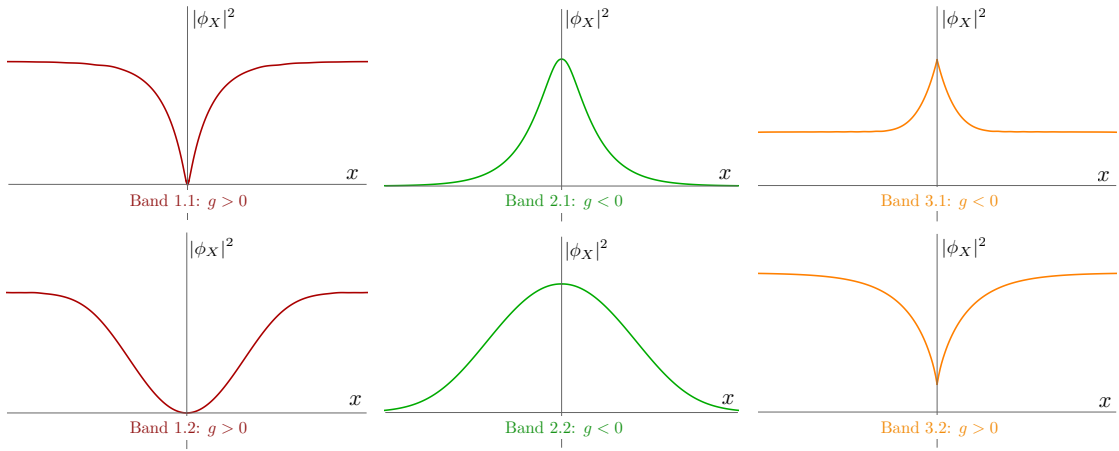}}}
\caption{\small The square modulus $|\exciton(x)|^2$ of the exciton field of the solitons in the bands depicted in Fig.~\ref{fig:bands}.  The exciton fields $\exciton(x)$ of the solitons in bands 3.1 and 3.2 are antisymmetric and experience a discontinuity at the point where $|\exciton(x)|^2$ has a v-shape (see Fig.~\ref{fig:band3}) (note that the soliton of band 1.1 is smooth, although sharp, at its nadir).  The far-field asymptotic value of the solitons in bands 1.1, 1.2, 3.1, and 3.2 is $\zeta_\infty/g$ (defined in Eq.~(\ref{zetainfty})).
Bands 1.1 and 3.2 are unified into one band, as described in section~\ref{subsec:bandD}.}
\label{fig:field2}
\end{figure}

We may interpret the solitons derived in this work as ideal analytic descriptions of localized polariton formations in high-Q microcavities (see, {\itshape e.g.}, \cite{NelsenLiuSteger2013}). It is plausible that, within the finite region of a physical microcavity, the far-field values may be maintained by a small pumping.
Ref. \cite{TosiChristmannSavvidis_NatPhys2012} reports the creation of polariton condensates at two pump spots, and localized structures can be sustained in the region between the two spots where there is no pumping.
In~\cite{AmoPigeon_Science2011,GrossoNardin_PRB2012}, quasi-one-dimensional polariton structures are observed outside the pump spots.

\section{Lossless harmonic polaritons: reduction to ODE} 

Consider a lossless, unforced, one-dimensional polariton field, consisting of a photon wavefunction $\photon(x,t)$ and an exciton wavefunction $\exciton(x,t)$ dynamically coupled through their standard quantum-mechanical equations,
\begin{eqnarray}
  i\partial_t\exciton &=& \left(\omega_X + g|\exciton|^2\right)\exciton + \gamma\photon\,, \label{exciton}\\
  i\partial_t\photon &=& \left( \omega_C - \half \partial_{xx} \right) \photon + \gamma\exciton\,. \label{photon}
\end{eqnarray}
obtained by restricting (\ref{2Dpolaritons}) to one spatial dimension and setting $\kappa_{X,C}=0$ and $F=0$.  The time variable $t$ is normalized to an arbitrary unit of time $T$, frequencies (including $\ox$, $\oc$, $\gamma$, and $g$) are normalized to $1/T$, and the spatial variable $x$ is normalized to $\sqrt{T\hbar/m_C\,}$.  Thus all variables and parameters are non-dimensional.

The polariton equations (\ref{exciton},\ref{photon}) 
admit two quantities that are conserved in time,
\begin{eqnarray}
  N &=& \int \left( |\exciton|^2 + |\photon|^2 \right) dx , \\
  H &=& \int \left( {\textstyle\frac{1}{2}}|\partial_x\photon|^2 + \oc|\photon|^2 + \ox|\exciton|^2 + {\textstyle\frac{g}{2}}|\exciton|^4 + 2\gamma\,\mathrm{Re}(\exciton\photon^*) \right) dx.
\end{eqnarray}

A traveling-wave polariton field with carrier frequency $\omega$, modulated by an envelope has the form
\begin{eqnarray}
  \exciton(x,t) &=& \excitonenv(x-ct) e^{i(kx-\omega t)}, \\
  \photon(x,t) &=& \photonenv(x-ct) e^{i(kx-\omega t)}.
\end{eqnarray}
Under this ansatz, the polariton equations are equivalent to the pair
\begin{eqnarray} \label{traveling_solitons}
  -ic\excitonenv' &=& \left( \ox-\omega + g|\excitonenv|^2 \right)\excitonenv + \gamma\photonenv\,, \\
   i\left( k - c \right)\photonenv' &=& \textstyle\left( \oc-\omega + \frac{k^2}{2} \right)\photonenv - \frac{1}{2}\photonenv'' + \gamma\excitonenv,
\end{eqnarray}
in which the prime denotes the derivative with respect to the argument.

This system of two complex ODEs reduces to a system of two real ODEs when the polariton envelope depends only on the spatial variable (speed of travel $c\!=\!0$) and the polariton  carrier phase  is spatially invariant  (wavenumber $k=0$):
\begin{equation}\label{stationarysolution1}
  \left( \exciton(x,t), \photon(x,t) \right) = \left( \excitonenv(x), \photonenv(x) \right) e^{-i\omega t}.
\end{equation}
Under this assumption, and with the notation
\begin{equation*}
  \wx=\omega-\ox\,,
  \qquad
  \wc=\omega-\oc\,,
\end{equation*}
the pair {of real functions} $(\excitonenv,\photonenv)$ satisfies the equations
\begin{eqnarray}
  \left( g\excitonenv^2 - \wx \right)\excitonenv + \gamma\photonenv &=& 0\,,\label{envelopeODE1a}\\
  -\half \photonenv'' - \wc\photonenv + \gamma\excitonenv &=& 0\,.\label{envelopeODE1b}
\end{eqnarray}
The first equation fully determines  the photon field $\photonenv$ as an odd cubic polynomial function of the exciton field $\excitonenv$, illustrated in Fig.~\ref{fig:cubic}.  Thus the field value pair $(\excitonenv(x),\photonenv(x))$ runs along the graph of the cubic as the spatial variable $x$ varies.  For the solitons in bands 1.1, 1.2, 3.1, and 3.2, the field pair lies on this cubic between the two nonzero equilibrium points $(\excitonenv^\infty,\photonenv^\infty)$ and $(-\excitonenv^\infty,-\photonenv^\infty)$ of the system of equations (\ref{envelopeODE1a},\ref{envelopeODE1b}), where
\begin{equation}\label{eqsoln}
\begin{split}
  \excitonenv^\infty &= \sqrt{\frac{1}{g}\left( \wx - \frac{\gamma^2}{\wc} \right)\,}\,, \\
  \photonenv^\infty &= \frac{\gamma}{\wc} \excitonenv^\infty\,.
\end{split}
\end{equation}
These equilibrium points are indicated by open dots in the graphs in section~\ref{sec:graphs}.
They correspond to homogeneous (spatially constant and time-harmonic) solutions of the polariton equations (\ref{exciton},\ref{photon}),
\begin{equation}\label{homogeneous}
  (\exciton(x,t),\photon(x,t)) \,=\, \pm (\excitonenv^\infty,\photonenv^\infty) e^{-i\omega t}\,.
\end{equation}

\begin{figure} 
\centerline{\small
\begin{tabular}{ c | c | c |}
   & $\wx<0$ & $\wx>0$ \\ \hline
  $g<0$ 
  & \parbox{0.18\textwidth}{\centerline{\scalebox{0.2}{\includegraphics{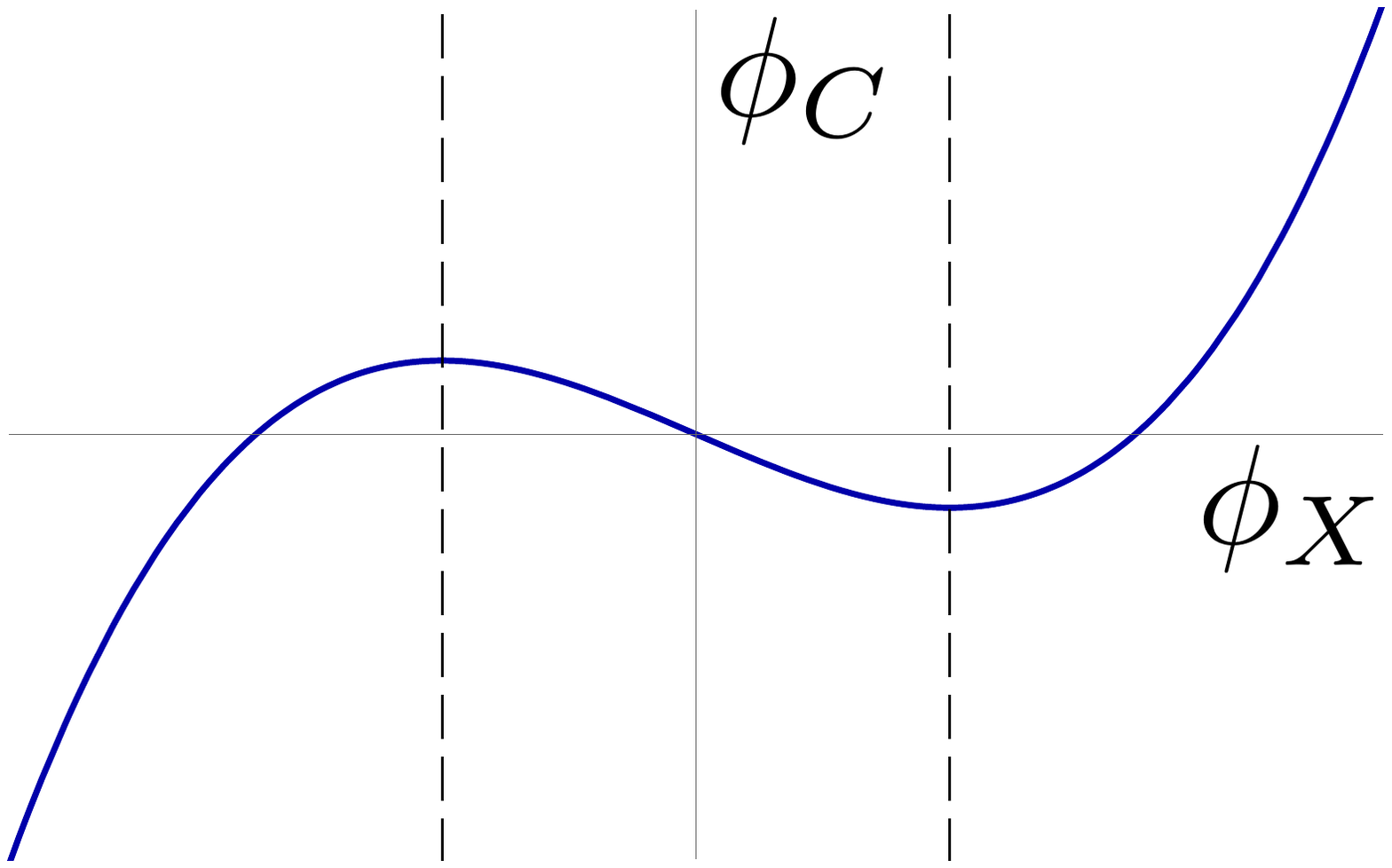}}}}
  & \parbox{0.18\textwidth}{\centerline{\scalebox{0.2}{\includegraphics{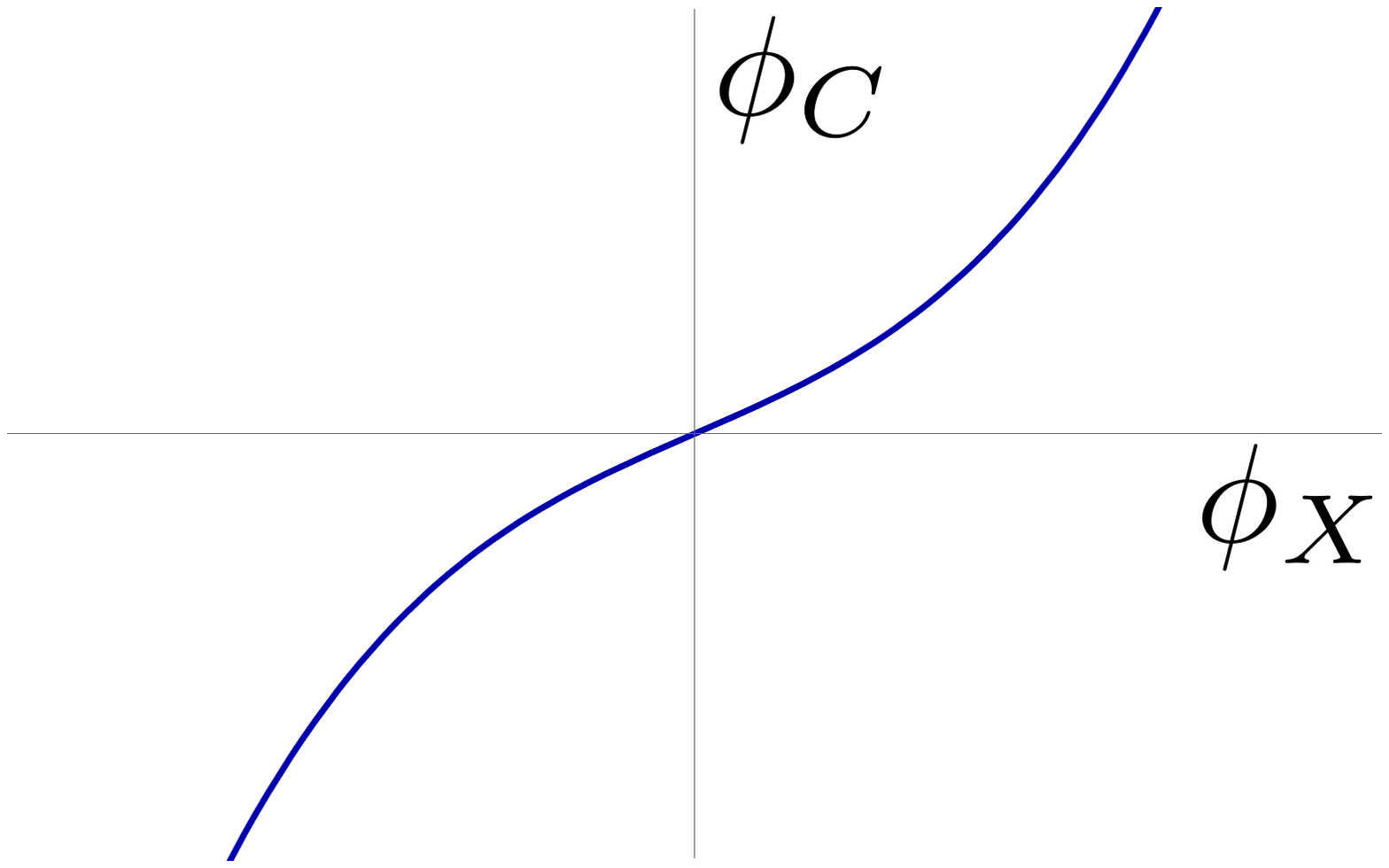}}}}
  \\ \hline
  $g>0$
  & \parbox{0.18\textwidth}{\centerline{\scalebox{0.2}{\includegraphics{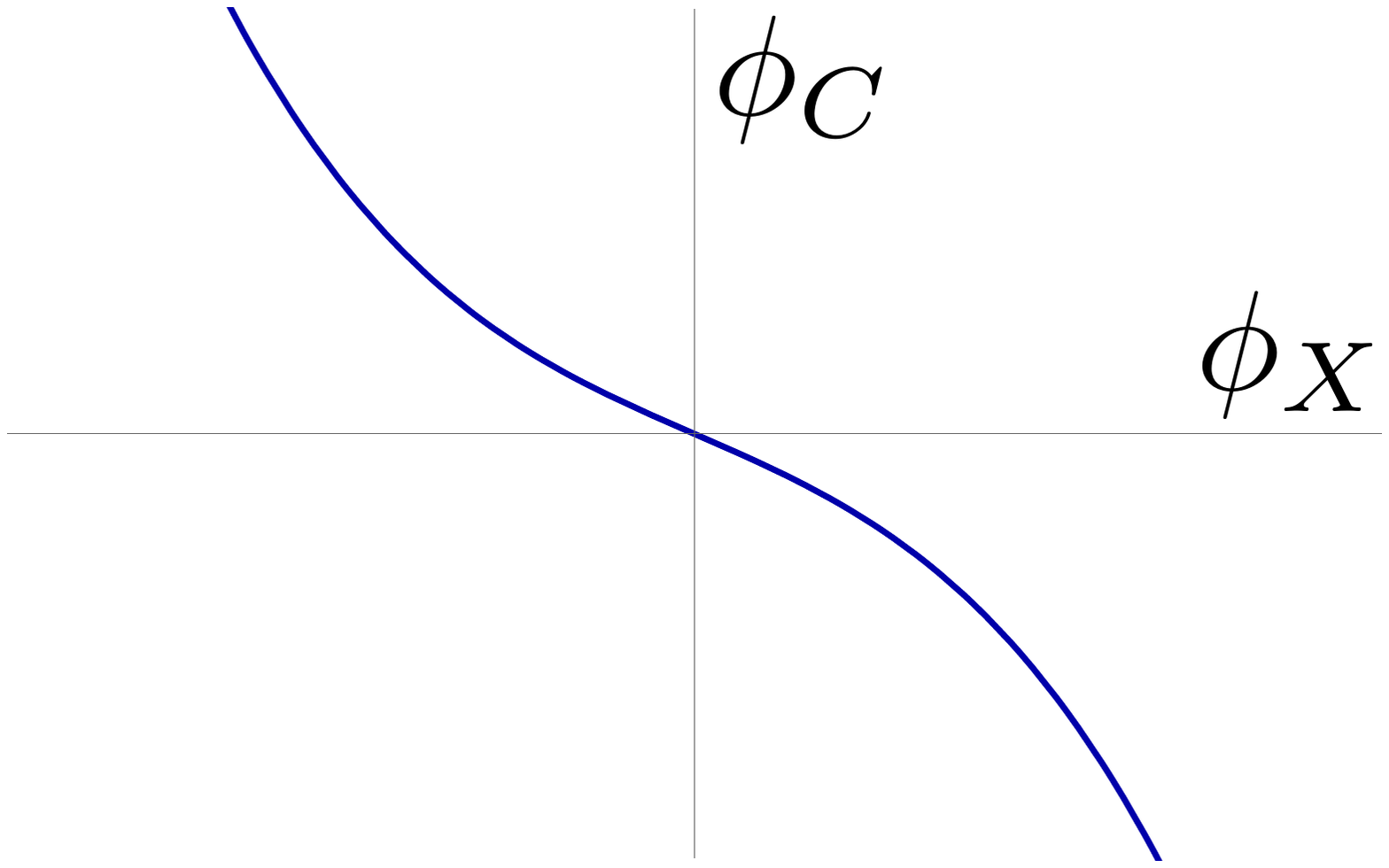}}}}
  & \parbox{0.18\textwidth}{\centerline{\scalebox{0.2}{\includegraphics{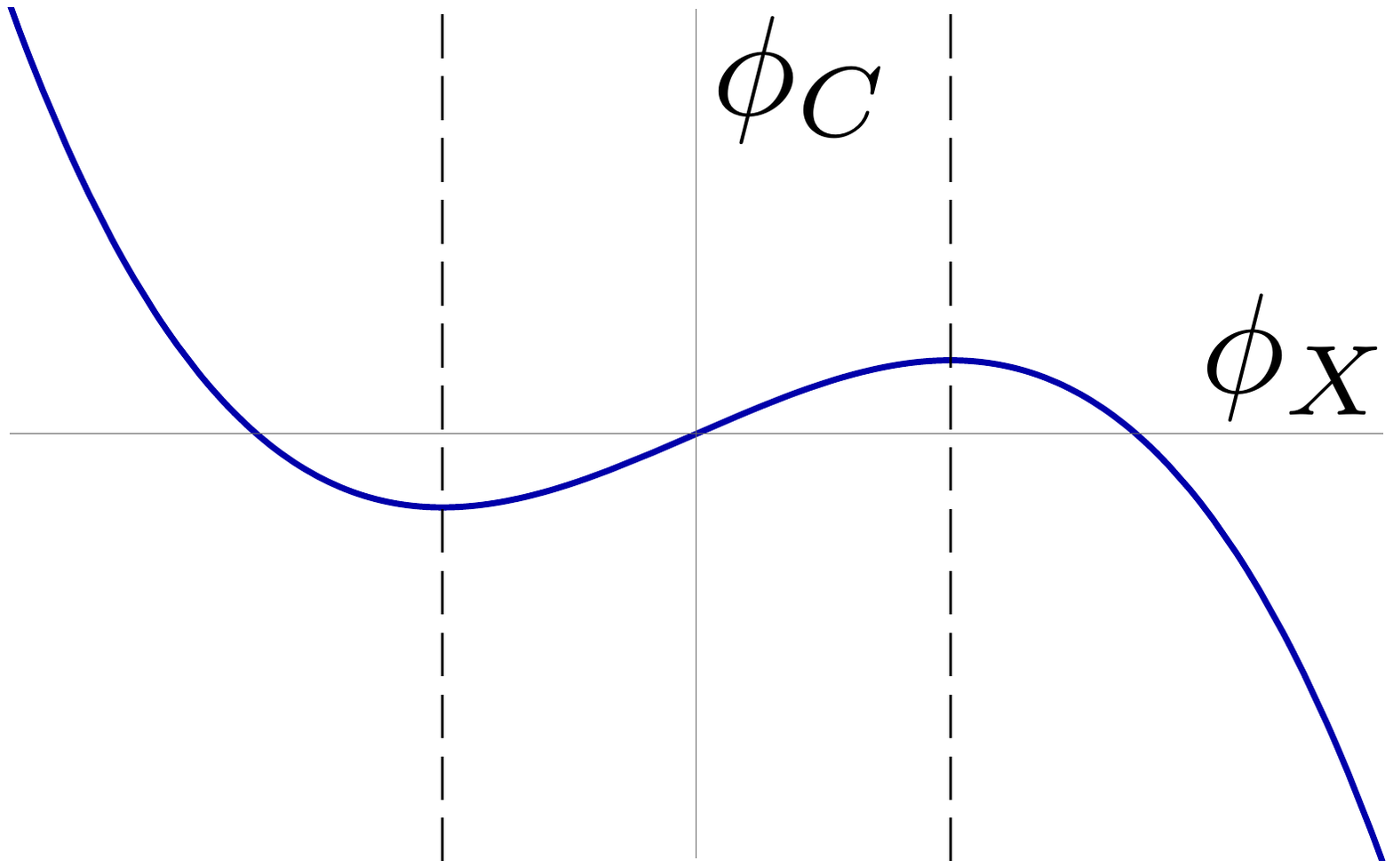}}}}
  \\ \hline  
\end{tabular}
}
\caption{\small The relation (\ref{phirelation}) that gives the photon envelope value $\photonenv$ {\itshape vs.} the exciton envelope value $\excitonenv$ of a harmonic solution of the form $\left( \exciton(x,t), \photon(x,t) \right) = \left( \excitonenv(x), \photonenv(x) \right) e^{-i\omega t}$ of the polariton equations (\ref{exciton},\ref{photon}).  Its shape depends on the signs of $g$ and $\wx=\omega-\ox$.  A soliton solution cannot cross the vertical dotted lines through the critical points.}
\label{fig:cubic}
\end{figure}

In the case that $g\wx<0$, equation (\ref{envelopeODE1a}) defines a monotonic relation between $\photonenv$ and $\excitonenv$ (the off-diagonal graphs in Fig.~\ref{fig:cubic}).  Thus (\ref{envelopeODE1b}) can be written in the form
\begin{equation}\label{hamiltonianODE}
  \photonenv''(x) + U'(\photonenv(x)) = 0,
\end{equation}
which results in the conservation of an energy-type function,
\begin{equation}\label{hamiltonian}
  \half (\photonenv')^2 + U(\photonenv) = K,
\end{equation}
with $K$ being an arbitrary constant.  This equation describes orbits of the system in the $(\photonenv,\photonenv')$-plane (phase plane); homoclinic and heteroclinic orbits correspond to solitons.  In the case that $g\wx>0$, the relation (\ref{envelopeODE1a}) between $\photonenv$ and $\excitonenv$ has three monotonic branches (the diagonal graphs in Fig.~\ref{fig:cubic}) separated by two critical points, where $\photonenv$ achieves a local maximum or minimum as a function of $\excitonenv$.  These two critical points occur when
\begin{equation}
  g\excitonenv^2 = \wx/3 \qquad \text{(critical points)}.
\end{equation}
Within the domain of each branch separately, an equation of the form (\ref{hamiltonian}) holds.

The system (\ref{envelopeODE1a},\ref{envelopeODE1b}) can in fact admit a solution that passes through either critical point, from one branch of the cubic (\ref{envelopeODE1a}) to another.  But such a solution is unique---there is no arbitrary constant of integration analogous to the constant $K$ in (\ref{hamiltonian}).  Moreover, a solution passing through a critical point cannot be a soliton; it is either periodic or becomes unbounded as $|x|\to\infty$.  This is stated in part (c) of Theorem~\ref{thm:zeta}.

One therefore knows that all soliton solutions of (\ref{envelopeODE1a},\ref{envelopeODE1b}) are confined to a single monotonic branch of the cubic (\ref{envelopeODE1a})---the exciton envelope function $\excitonenv(x)$ cannot pass through the values $\sqrt{\wx/(3g)}$.  We will refer to this as the {\em connectivity condition} for polariton solitons.

\smallskip
Analysis of all soliton solutions of (\ref{envelopeODE1a},\ref{envelopeODE1b}), especially with regard to their dependence on $\wx$, $\wc$, and $g$, is complex.  We find that working with the variable
\begin{equation}
 \zeta=g\excitonenv^2
\end{equation}
renders the analysis most transparent.
The system (\ref{envelopeODE1a},\ref{envelopeODE1b}), once integrated, becomes a first-order ODE (Theorem~\ref{thm:zeta}) that expresses $(\zeta')^2$ as a rational function of $\zeta$, in which the nonlinearity parameter $g$ is no longer present.  The forbidden value $\zeta = \wx/3$, occurring at the critical points of the cubic (\ref{envelopeODE1a}), is manifest as a singularity of the equation
\begin{equation}
  (\zeta')^2 = \frac{8}{9} \frac{\zeta Q(\zeta)}{\left(\zeta - \frac{\wx}{3} \right)^2}\,.
\end{equation}
($Q$ is a cubic polynomial defined below.) The sign of $g$ is determined by the sign of $\zeta(x)$, which is constant for any solution.

In terms of $\zeta$, the equilibrium points of (\ref{envelopeODE1a},\ref{envelopeODE1b}) are expressed as $g(\excitonenv^\infty)^2=\zeta_\infty$, where $\zeta_\infty$ is the non-dimensional frequency
\begin{equation}\label{zetainfty}
  \zeta_\infty \,=\, \zeta_\infty(\omega) \,:=\, \wx - \frac{\gamma^2}{\wc}
  \,=\, \omega-\ox - \frac{\gamma^2}{\omega-\oc}  \,.
\end{equation}
\smallskip

\begin{theorem}\label{thm:zeta}
(a) The pair of equations (\ref{envelopeODE1a},\ref{envelopeODE1b}) implies the pair
\begin{eqnarray}
    && (3\zeta - \wx)^2\,\zeta'^2 \,=\, 8\, \zeta\,Q(\zeta)\,, \label{zetaODE} \\
    &&  \photonenv \,=\, \textstyle\frac{1}{\gamma} \left( \wx - g\excitonenv^2 \right)\excitonenv\,, \label{phirelation}
\end{eqnarray}
in which $\zeta(x)=g\excitonenv^2(x)$ and $Q(\zeta)$ is a cubic polynomial in $\zeta$,
\begin{equation}\label{Q_polynomial}
  Q(\zeta) = Q(\zeta;\omega) = -\wc\left( \zeta^3 - \textstyle\frac{1}{2}(3\zeta_\infty+\wx)\,\zeta^2 + \zeta_\infty\wx\,\zeta + K \right)
\end{equation}
and $K$ is a constant of integration.
Conversely, whenever $\excitonenv'(x)\photonenv'(x)\not=0$, the pair (\ref{zetaODE},\ref{phirelation}) implies the pair (\ref{envelopeODE1a},\ref{envelopeODE1b}).

(b)  Whenever $g$ and $\wx$ have the same sign, the local extrema of
\,$\photonenv = \gamma^{-1} \left( \wx - g\excitonenv^2 \right)\excitonenv$\, occur when $\,\zeta = g\excitonenv^2=\wx/3$.
The roots of $Q'$ are $\zeta_\infty$ and $\wx/3$.
Thus, whenever $\wx/3$ is a root of $Q$, the factor $(3\zeta - \wx)^2$ appears on both sides of (\ref{zetaODE}).

(c) If a solution $(\excitonenv(x),\photonenv(x))$ of (\ref{envelopeODE1a},\ref{envelopeODE1b}) passes through a critical point of the cubic (\ref{envelopeODE1a}), then either the solution is periodic or $|\photonenv(x)|$ and $|\excitonenv(x)|$ tend to $\infty$ as $|x|\to\infty$.

\end{theorem}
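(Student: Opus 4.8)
The plan is to treat the three parts in turn, with part (a) supplying the central computation from which (b) and (c) follow.

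For the forward direction of (a), observe first that (\ref{phirelation}) is simply (\ref{envelopeODE1a}) solved for $\photonenv$, so it holds automatically. To obtain (\ref{zetaODE}) I would extract a first integral of the Newtonian equation (\ref{envelopeODE1b}). Multiplying (\ref{envelopeODE1b}) by $\photonenv'$ and using (\ref{phirelation}) to rewrite the coupling term, one checks that $\gamma\excitonenv\photonenv' = \frac{d}{dx}\bigl(\frac{\wx}{2}\excitonenv^2 - \frac{3g}{4}\excitonenv^4\bigr)$, so the entire equation is an exact derivative. Integrating once yields an energy of the form $(\photonenv')^2 + 2\wc\photonenv^2 - 2\wx\excitonenv^2 + 3g\excitonenv^4 = \mathrm{const}$. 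The decisive step is to pass to $\zeta = g\excitonenv^2$: using $\photonenv = \gamma^{-1}(\wx-\zeta)\excitonenv$, $(\photonenv')^2 = \gamma^{-2}(\wx-3\zeta)^2(\excitonenv')^2$, and $(\excitonenv')^2 = (\zeta')^2/(4g\zeta)$, every factor of $g$ either cancels or is absorbed into the constant, and clearing denominators produces $(3\zeta-\wx)^2\zeta'^2$ on the left. Matching the resulting quartic in $\zeta$ against $8\zeta Q(\zeta)$ fixes the coefficients; the identity $\wc\zeta_\infty = \wc\wx - \gamma^2$ coming from (\ref{zetainfty}) is exactly what forces the $\zeta^2$ and $\zeta^3$ coefficients to agree, and the constant of integration is repackaged as $K$.

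For the converse I would argue dynamically rather than by inverting the algebra. Equation (\ref{phirelation}) is equivalent to (\ref{envelopeODE1a}). Differentiating the energy $E := (\photonenv')^2 + 2\wc\photonenv^2 - 2\wx\excitonenv^2 + 3g\excitonenv^4$, which (\ref{zetaODE}) asserts is constant, and again using (\ref{phirelation}) to collapse the exciton terms, one reaches the clean identity $E' = 2\photonenv'\bigl(\photonenv'' + 2\wc\photonenv - 2\gamma\excitonenv\bigr)$. Since $E$ is constant, the parenthesized factor must vanish wherever $\photonenv'\neq 0$, and that factor is precisely $-2$ times (\ref{envelopeODE1b}). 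The hypothesis $\excitonenv'\photonenv'\neq 0$ guarantees both that the substitution $\zeta=g\excitonenv^2$ is locally invertible (via $\excitonenv'\neq0$) and that we may divide by $\photonenv'$.

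Part (b) is calculus on the cubic. Viewing $\photonenv = \gamma^{-1}(\wx\excitonenv - g\excitonenv^3)$ as a function of $\excitonenv$, we have $d\photonenv/d\excitonenv = \gamma^{-1}(\wx - 3g\excitonenv^2)$, which vanishes exactly at $\zeta = g\excitonenv^2 = \wx/3$, and this has a real root only when $g\wx>0$. Differentiating (\ref{Q_polynomial}) gives $Q'(\zeta) = -\wc\bigl(3\zeta^2 - (3\zeta_\infty+\wx)\zeta + \zeta_\infty\wx\bigr)$, which factors as $-\wc(\zeta-\zeta_\infty)(3\zeta-\wx)$, so its roots are $\zeta_\infty$ and $\wx/3$. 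Hence, if $\wx/3$ is also a root of $Q$, it is a \emph{double} root, so $Q(\zeta) = -\wc(\zeta-\wx/3)^2(\zeta-r)$ for some $r$, and the factor $(3\zeta-\wx)^2 = 9(\zeta-\wx/3)^2$ divides $8\zeta Q(\zeta)$, matching the identical factor on the left of (\ref{zetaODE}).

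Part (c) exploits this cancellation, and is where I expect the genuine subtlety to lie (the algebra in (a) being laborious but routine). If a solution attains the critical value $\zeta=\wx/3$ at some point, the left side of (\ref{zetaODE}) vanishes there, forcing $Q(\wx/3)=0$ (note $\wx\neq0$ whenever critical points exist, since these require $g\wx>0$); by (b), $\wx/3$ is then a double root of $Q$. Cancelling $(3\zeta-\wx)^2$ from both sides of (\ref{zetaODE}) reduces it to $\zeta'^2 = -\frac{8}{9}\wc\,\zeta(\zeta-r)$, whose right-hand side is only quadratic in $\zeta$. A phase-plane analysis then gives the dichotomy: the turning points $\zeta=0$ and $\zeta=r$ are simple roots, so a bounded orbit oscillates between them and is periodic, while any orbit not trapped between them escapes to $|\zeta|=\infty$, whence $|\excitonenv|=\sqrt{|\zeta/g|}\to\infty$ and $\photonenv\sim -g\excitonenv^3/\gamma\to\infty$. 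The point requiring the most care is excluding a soliton here: a homoclinic or heteroclinic orbit would need $\zeta'$ to vanish at a far-field value approached over an infinite $x$-interval, which demands a \emph{double} root of the right-hand side, and the only degenerate quadratic case $r=0$ gives merely exponential behavior $\zeta\sim e^{\pm\sqrt{-8\wc/9}\,x}$, again unbounded rather than asymptotically constant. Thus the drop in degree produced by the cancellation is exactly what forbids a soliton through a critical point.
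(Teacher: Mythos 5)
Your parts (a) and (b) are sound and essentially follow the paper's route: the paper obtains the same first integral by multiplying the two equations by $\excitonenv'$ and $\photonenv'$, adding, and integrating, and then passes to $\zeta=g\excitonenv^2$. Two small differences are to your credit: your direct factorization $Q'(\zeta)=-\wc(3\zeta-\wx)(\zeta-\zeta_\infty)$ replaces the paper's more indirect argument (that every root of $f_1'$ is a root of $P'$), and your dynamical proof of the converse, via $E'=2\photonenv'\left(\photonenv''+2\wc\photonenv-2\gamma\excitonenv\right)$, is a clean way to see why the hypothesis $\excitonenv'\photonenv'\neq0$ is what is needed.

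The genuine gap is in part (c). Your opening inference --- that reaching $\zeta=\wx/3$ forces $Q(\wx/3)=0$ because ``the left side of (\ref{zetaODE}) vanishes there'' --- tacitly assumes $\zeta'$ stays bounded at the critical point. That is not automatic: the substitution degenerates there, since $\gamma\photonenv'=(\wx-3g\excitonenv^2)\excitonenv'$ permits $\excitonenv'$, and hence $\zeta'=2g\excitonenv\excitonenv'$, to blow up while $\photonenv'$ stays finite. Concretely, when $(\wx/3)\,Q(\wx/3)>0$ the ODE (\ref{zetaODE}) admits solutions crossing the critical value with $\zeta(x)-\wx/3\sim\pm|x|^{1/2}$, for which the product $(3\zeta-\wx)^2\zeta'^2$ has a finite \emph{nonzero} limit at the crossing; such a solution switches branches of the cubic and could a priori still tend to a double root of $\zeta Q$ at infinity, i.e.\ be a soliton, so your dichotomy does not yet follow. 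The paper spends the second half of its proof of (c) excluding exactly this case: propagating the $|x|^{1/2}$ expansion through the cubic gives $\photonenv(x)-\photonenv(0)=\pm|x|\,h(\pm|x|^{1/2})$ with $h(0)>0$, so $\photonenv$ has a corner at the crossing point and $\photonenv''$ acquires a nonzero multiple of $\delta(x)$, contradicting (\ref{envelopeODE1b}). Your proof needs this (or an equivalent) supplement. The case you do treat, $Q(\wx/3)=0$, matches the paper's phase-line argument, and your remark on the degenerate subcase $r=0$ is a correct refinement of it.
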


\begin{proof}
To prove these statements, the structure of equations (\ref{envelopeODE1a},\ref{envelopeODE1b}) is illuminated by writing them as
\begin{eqnarray}
  f_1(\excitonenv) + \gamma\photonenv &=& 0\,,\label{envelopeODE2a}\\
  f_2(\photonenv) + \gamma\excitonenv &=& \half\photonenv'' \,,\label{envelopeODE2b}
\end{eqnarray}
in which $f_1$ and $f_2$ are odd polynomials.

By multiplying the first equation by $\excitonenv'$ and the second by $\photonenv'$, adding, and then taking antiderivatives, one obtains
\begin{equation}\label{integratedODE1}
  K' + \tilde f_1(\excitonenv) + \tilde f_2(\photonenv) + \gamma\excitonenv\photonenv \,=\, \fourth(\photonenv')^2\,,
\end{equation}
in which the even polynomials $\tilde f_{1,2}$ are primitives of $f_{1,2}$ and $K'$ is an arbitrary constant.
Equation (\ref{envelopeODE2a}) expresses $\photonenv$ as an odd polynomial function of $\excitonenv$, so the left-hand side of (\ref{integratedODE1}) is an even polynomial function of $\excitonenv$, say $P(\excitonenv)$.  Thus (\ref{envelopeODE2a},\ref{envelopeODE2b}) is equivalent to the validity of the pair
\begin{eqnarray}
  P(\excitonenv) &=& \fourth \left( \photonenv' \right)^2 \,,\label{envelopeODE3a}\\
  f_1(\excitonenv) + \gamma\photonenv &=& 0 \,,\label{envelopeODE3b}  
\end{eqnarray}
for some constant $K'$ in the definition of $P$, which is computed to be
\begin{equation}\label{P}
  P(\phi) = K' - \frac{\wc g^2}{2\gamma^2} \phi^6
     + g \left( \frac{\wx\wc}{\gamma^2} - \frac{3}{4} \right) \phi^4
     + \frac{\wx}{2} \left( 1 - \frac{\wx\wc}{\gamma^2} \right) \phi^2\,.
\end{equation}

Equation (\ref{envelopeODE3a}) can be written equivalently in terms of $\excitonenv$ alone by differentiating (\ref{envelopeODE3b}) with respect to $x$ and substituting the resulting expression for $\photonenv'$ into the right-hand side of (\ref{envelopeODE3a}),
\begin{equation}\label{integratedODE2}
  4\gamma^2\, P(\excitonenv) \,=\, f_1'(\excitonenv)^2\, (\excitonenv')^2\,.
\end{equation}

Since $P$ is even, it is a cubic polynomial function of $\zeta = g\excitonenv^2$, and a calculation converts (\ref{integratedODE2}) into the differential equation stated in the theorem,
\begin{equation}\label{zetaODE2}
  {(3\zeta - \wx)^2\zeta'^2 \,=\, 8\, \zeta\,Q(\zeta)\,,}
\end{equation}
in which the cubic polynomial $Q$ is related to $P$ through\, $2g\gamma^2P(\excitonenv)=Q(\zeta)$\, and is given~by
\begin{equation}
  Q(\zeta) 
   \,=\, 2g\gamma^2K' + \zeta\wx(\gamma^2-\wx\wc) - \zeta^2\left( {\textstyle\frac{3}{2}}\gamma^2 - 2\wx\wc \right) - \zeta^3\wc\, .
\end{equation}
Notice that $g$ has disappeared from the differential equation (\ref{zetaODE}) except where it is multiplied by the constant of integration in $Q$.

We must now show that $Q$ has the form given in part (a) of the Theorem and prove part~(b).  For part (b), differentiate (\ref{envelopeODE3a}) with respect to $x$ to obtain
\begin{equation}\label{P2}
  P'(\excitonenv)\excitonenv' \,=\, \half\photonenv'' \photonenv'\,.
\end{equation}
Then, by using (\ref{envelopeODE2b}) and the $x$-derivative of (\ref{envelopeODE3b}), equation (\ref{P2}) is rewritten as
\begin{equation}
 P'(\excitonenv) \,=\,
      -f_1'(\excitonenv)\left( \excitonenv + \gamma^{-1} f_2(\photonenv) \right)
\end{equation}
when $\excitonenv'\not=0$.  But both sides of this equation are polynomials in $\excitonenv$ (recall $\photonenv$ is a polynomial function of $\excitonenv$), so this leads to the observation that {any root of the polynomial $f_1'$ is also a root of~$P'$}.
But $f_1'(\excitonenv) = 3g\excitonenv^2 - \wx$, so 
\begin{equation}\label{Pprime}
  P'(\excitonenv) \,=\, 0
  \qquad
  \text{when}
  \quad
  {g\excitonenv^2=\wx/3}\,.
\end{equation}
From the relation \,$2g\gamma^2P(\excitonenv)=Q(\zeta)$,\, one finds that
\begin{equation}
  Q'(\wx/3)=0 \qquad \text{(if $\wx\not=0$)},
\end{equation}
which proves part (b).

The other root of $Q'$ is found to be
\,$\zeta_\infty = \wx - \frac{\gamma^2}{\wc}$,\, and one can write $Q$ as stated in the theorem (\ref{Q_polynomial}), with
\begin{equation}
  K = -\frac{2g\gamma^2\,K'}{\wc}\,.
\end{equation}

To prove part (c), suppose first that a classical solution of (\ref{envelopeODE1a},\ref{envelopeODE1b}) satisfies $\zeta(0)=g\excitonenv(0)^2 = \wx/3$, so that the first factor of the left-hand side of (\ref{zetaODE}) vanishes at $x=0$.  Thus $Q(\wx/3)=0$, and by part (b), $\wx/3$ is a double root of $Q$, and (\ref{zetaODE}) reduces to
\begin{equation}
  (\zeta')^2 = -\frac{8\wc}{9} \zeta(\zeta-\zeta_0),
\end{equation}
in which
\begin{equation*}
  \zeta_0 = \frac{4\wx}{3} - \frac{3\gamma^2}{2\wc}\,.
\end{equation*}
By standard phase-plane analysis, the solutions of this equation are either constant, periodically oscillating between $\zeta=0$ and $\zeta=\zeta_0$, or tending to $\infty$ as $|x|\to\infty$.

Next, suppose that $(\wx/3)Q(\wx/3)>0$ and that $\zeta(x)$ is continuous with $\zeta(0)=\wx/3$.  We will show that this case is ruled out.
In what follows, the functions $h_i(y)$ are analytic at $y=0$ with $h_i(0)>0$.
Because the numerator in the differential equation
\begin{equation}
  \zeta' \,=\, \pm\frac{\sqrt{8\zeta Q(\zeta)\,}}{3\zeta-\wx}
\end{equation}
is positive at $\zeta=\wx/3$, one obtains, for $x$ near $0$, either
\begin{equation}
 \zeta(x) = \frac{\wx}{3} + |x|^{1/2} h_1(|x|^{1/2})
\end{equation}
or
\begin{equation}
 \zeta(x) = \frac{\wx}{3} - |x|^{1/2} h_1(-|x|^{1/2})\,,
\end{equation}
in which one sign is chosen for $x>0$ and one sign is chosen for $x<0$.
Denote
\begin{equation}
  \tilde\excitonenv(x) = \excitonenv(x) - \excitonenv(0)\,,
  \qquad
  \tilde\photonenv(x) = \photonenv(x) - \photonenv(0)\,,
\end{equation}
with $\excitonenv(0)=\pm\sqrt{\wx/(3g)\,}$.
From $\wx\not=0$ and the equation $\excitonenv(x) = \sqrt{\zeta(x)/g\,}$ (the sign of $g$ is chosen to make the square root real), one obtains
\begin{equation}
  \tilde\excitonenv(x) \;=\; |x|^{1/2} h_2(|x|^{1/2})
  \quad\text{or}\quad - |x|^{1/2} h_2(-|x|^{1/2})\,.
\end{equation}
Since $(\excitonenv(0),\photonenv(0))$ is a critical point of the cubic $\photonenv$ {\itshape vs.} $\excitonenv$ relation, one has
\begin{equation}
  \tilde \photonenv(x) = \pm \tilde\excitonenv(x)^2 h_3(\tilde\excitonenv(x)) = \pm |x| h_4(\pm|x|^{1/2})\,.
\end{equation}
It follows that $\photonenv''(x)$ has a leading singular part equal to a nonzero multiple of the delta function $\delta(x)$, and this is inconsistent with equation (\ref{envelopeODE1b}).
\end{proof}

\section{Polariton solitons}\label{sec:solitons} 

A solution of the ODE (\ref{zetaODE}) corresponds to a stationary, or non-traveling, time-harmonic solution of the polariton system (\ref{exciton},\ref{photon}).  Our interest is in soliton solutions, for which the spatial envelope has a limiting value as $|x|\to\infty$.  All solitons and their frequency bands are described in Theorem~\ref{thm:solitons} below, and proved in section~\ref{sec:proofsolitons}.  The system admits bands of dark and bright solitons.

In addition, we find solitons for which the exciton field is discontinuous at its point of symmetry where the photon field vanishes.  {These are distributional solutions of the soliton equations; this is proved in section~\ref{sec:proofsolitons} (p.\,\pageref{discontinuoussolitons}).}  The physical origin of the discontinuity is that the vanishing of the photon field at a point in space turns off the interaction between neighboring excitons because this interaction is mediated only by the coupling of the exciton field to the dispersive photon field.
When $\ox<\oc$, a band of continuous dark solitons and a band of discontinuous dark solitons can be unified into a single band of dark solitons, as described in section~\ref{subsec:bandD}.

\subsection{Main theorem: description of all solitons}

The reduction of the polariton system (\ref{exciton},\ref{photon}) to an ODE (\ref{zetaODE}) under the assumption of harmonic solutions allows a complete derivation of all stationary soliton solutions of (\ref{exciton},\ref{photon}).  By a stationary soliton solution, we mean a harmonic solution $(\exciton(x,t),\photon(x,t))$ = $(\excitonenv(x),\photonenv(x))e^{-i\omega t}$ for which the envelopes $(\exciton(x),\photon(x))$ tend to far-field values as $x\to\pm\infty$.  The form of the ODE, $\zeta'^2=f(\zeta)$ guarantees that $|\zeta(x)|$ and therefore also $|\excitonenv(x)|$ exhibit a single maximum at the soliton peak or minimum at the soliton nadir.

\begin{theorem}\label{thm:solitons}
There are bright, dark, and discontinuous stationary soliton solutions of the lossless, unforced polariton equations (\ref{exciton},\ref{photon}) for frequencies within certain bands that depend on $\ox$, $\oc$, and $\gamma$.
The three soliton classes described below exhaust all solutions of the form 
\begin{equation}\label{stationarysolution2}
  \left( \exciton(x,t), \photon(x,t) \right) \,=\, \left( \excitonenv(x), \photonenv(x) \right) e^{-i\omega t}
\end{equation}
for which $\excitonenv(x)$ and $\photonenv(x)$ have limits (far-field values) as $x\to\pm\infty$.
\begin{enumerate}
  \item {\bf Dark solitons.} (Red bands in Figs.~\ref{fig:band1} and \ref{fig:band123}) Equations (\ref{exciton},\ref{photon}), for $g>0$, admit solutions of the form (\ref{stationarysolution2}) for which $\excitonenv(x)$ and $\photonenv(x)$ are antisymmetric, monotonic, and bounded.  The frequency bands for which these solutions exist are given by
\begin{equation}\label{bands1}
  \renewcommand{\arraystretch}{1.0}
\left.
  \begin{array}{ll}
    0 < \wx\wc < \gamma^2 & \text{if\hspace{0.8em} $\omega<\min\{\omega_C,\omega_X\}$}\,,
    \quad \text{(band 1.1)}\\
    \vspace{-2ex}\\
    \gamma^2 < \wx\wc < \frac{3}{2}\gamma^2 & \text{if\hspace{0.8em} $\omega>\max\{\omega_C,\omega_X\}$}\,.
        \quad \text{(band 1.2)}
  \end{array}
\right.
\end{equation}
The far-field (suprimal) value of $|\excitonenv(x)|$ is
\begin{equation}\label{farfield1}
  \lim_{|x|\to\infty} |\excitonenv(x)| \;=\; \frac{1}{\sqrt{g}} \left| \wx - \frac{\gamma^2}{\wc} \right|^{1/2}\,.
\end{equation}
  \item {\bf Bright solitons.} (Green bands in Figs.~\ref{fig:band2} and \ref{fig:band123}) Equations (\ref{exciton},\ref{photon}), for $g<0$, admit solutions of the form (\ref{stationarysolution2}) for which $\excitonenv(x)$ and $\photonenv(x)$ are symmetric and bounded and have a unique local maximum or minimum.  The frequency bands for which these solutions exist are given by
\begin{equation}\label{bands2}
  \renewcommand{\arraystretch}{1.0}
\left.
  \begin{array}{ll}
    \gamma^2 < \wx\wc < \frac{9}{8}\gamma^2 & \text{if\hspace{0.8em} $\omega<\min\{\omega_C,\omega_X\}$}\,,
        \quad \text{(band 2.1)}\\
    \vspace{-2ex}\\
    0 < \wx\wc < \gamma^2 & \text{if\hspace{0.8em} $\omega>\max\{\omega_C,\omega_X\}$}\,.
        \quad \text{(band 2.2)}\\
  \end{array}
\right.
\end{equation}
These solitons vanish at the far field ($|x|\to\infty$), and $|g|\excitonenv(x)^2$ attains a maximal value~of
\begin{equation}\label{brightpeak}
  \max_{-\infty<x<\infty} |g|\excitonenv(x)^2 \,=\,
  - \wx +
  \frac{\gamma^2}{\wc}
  \left[
  \frac{3}{4} + \sqrt{\frac{9}{16}-\frac{\wx\wc}{2\gamma^2}}\,
  \right].
\end{equation}
  \item {\bf Discontinuous solitons.} (Orange bands in Figs.~\ref{fig:band3} and \ref{fig:band123}) Equations (\ref{exciton},\ref{photon}) admit antisymmetric bounded solutions of the form (\ref{stationarysolution2}) for which $\excitonenv(x)$ is discontinuous at $x=0$ but $\photonenv(x)$ is continuous.  They satisfy the polariton equations in the distributional sense, and away from the point of discontinuity they satisfy the equations classically.  These solutions exist in the following two frequency bands.
  \begin{enumerate}
    \item For $g<0$, and all frequencies satisfying
    \begin{equation*}
      \omega<\min\{\omega_C,\omega_X\} \quad \text{and} \quad \wx\wc > \textstyle \frac{3}{2}\gamma^2 \,,
          \quad \text{(band 3.1)}
    \end{equation*}
    there is a soliton such that $|\excitonenv(x)|$ decreases monotonically
    \begin{equation*}
      \text{from} \quad
      \left| \frac{\wx}{g} \right|^{1/2}
      \quad \text{down to} \quad
      \frac{1}{\sqrt{|g|}} \left| \wx - \frac{\gamma^2}{\wc} \right|^{1/2}
    \end{equation*}
as $x$ runs from the location of the peak of $|\excitonenv(x)|$ to $\infty$.
Where $|\excitonenv(x)|$ experiences its peak, $|\photonenv(x)|$ experiences its nadir.
    \item For $g>0$ and all frequencies satisfying
    \begin{equation*}
      \omega_X < \omega < \omega_C \,,
      \quad \text{(band 3.2)}
    \end{equation*}
    there is a dark soliton such that $|\excitonenv(x)|$ increases monotonically
    \begin{equation*}
      \text{from} \quad
      \left| \frac{\wx}{g} \right|^{1/2}
      \quad \text{up to} \quad
      \frac{1}{\sqrt{|g|}} \left| \wx - \frac{\gamma^2}{\wc} \right|^{1/2}
    \end{equation*}
    as $x$ runs from the location of the nadir of $|\excitonenv(x)|$ to $\infty$.
    In the negative detuning case, $\oc<\ox$, this band is absent.
  \end{enumerate}
\end{enumerate}
\end{theorem}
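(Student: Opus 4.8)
The plan is to treat the first-order reduction $(\zeta')^2 = 8\zeta Q(\zeta)/(3\zeta-\wx)^2$ from Theorem~\ref{thm:zeta} as an autonomous ODE in $\zeta$, read off all bounded orbits that possess far-field limits, and then translate each admissible orbit back into an exciton--photon pair via the cubic relation (\ref{phirelation}). First I would record the constraint forced by the far field: if $\zeta(x)\to\zeta_f$ as $|x|\to\infty$ then $\zeta'\to0$, and since the connectivity condition keeps the orbit away from $\zeta=\wx/3$, the denominator stays bounded away from zero; hence a non-constant orbit can asymptote to $\zeta_f$ over an infinite $x$-interval only if $\zeta_f$ is a \emph{double} root of the numerator $N(\zeta)=\zeta\,Q(\zeta)$. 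There are exactly two ways this happens: either $\zeta_f=0$, which forces $Q(0)=0$, i.e. the integration constant $K=0$ (the bright case, field vanishing at infinity); or $\zeta_f=\zeta_\infty\neq0$, in which case part~(b) already gives $Q'(\zeta_\infty)=0$, so the double-root condition reduces to $Q(\zeta_\infty)=0$, again pinning $K$ down uniquely (the dark/discontinuous case). In both cases $K$ is determined, so up to translation and reflection there is at most one soliton per configuration, matching the ``one soliton per band'' claim.

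Next I would exploit the symmetry type. A symmetric envelope has an interior extremum of $\excitonenv$ with $\photonenv\to0$ at infinity; this is the bright family, whose orbit runs from a turning point (a simple root of $N$, i.e. a root of the quadratic factor of $Q$ left after removing the double root at $0$) out to $\zeta_f=0$. An antisymmetric envelope must have $\photonenv(0)=0$ at its symmetry point, and since $\photonenv=\gamma^{-1}(\wx-\zeta)\excitonenv$ this forces $\zeta(0)\in\{0,\wx\}$. The branch $\zeta(0)=0$ gives the continuous dark soliton, whose orbit runs monotonically from the simple root $\zeta=0$ at the center (reached in finite $x$, hence smooth through the nadir) to the double root $\zeta_\infty$ at the far field (reached only as $|x|\to\infty$). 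The branch $\zeta(0)=\wx$ gives the discontinuous soliton: there $\zeta$ reaches the value $\wx$, where $\photonenv=0$, with \emph{nonzero} derivative, producing the even v-shape of $\zeta$ and a sign jump of $\excitonenv$. With $K$ fixed, $Q$ factors as $Q(\zeta)=-\wc(\zeta-\zeta_\infty)^2(\zeta-r)$ with $r=\gamma^2/(2\wc)$ by Vieta, so the orbit structure is completely explicit.

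I would then determine the frequency bands purely by root ordering. For each sign pattern of $g$, $\wx$, $\wc$ I would locate $0$, $\zeta_\infty$, $r$, and $\wx/3$ on the $\zeta$-axis and demand (i) that $N>0$ strictly between the turning/jump point and $\zeta_f$, so that $\zeta'$ is real and single-signed (monotonicity), and (ii) the connectivity condition $\wx/3\notin$ (the open orbit interval). Condition (ii) produces the sharp endpoints: for instance $\wx/3\ge\zeta_\infty$ is equivalent to $\wx\wc\le\tfrac32\gamma^2$, which simultaneously closes band 1.2 and opens bands 3.1/3.2, while $\wx\wc=\gamma^2$ marks $\zeta_\infty=0$ (the linear-polariton edge) and $\wx\wc=\tfrac98\gamma^2$ is the discriminant condition at which the bright turning point collides. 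Reading the turning point and far-field root off the factorization then yields the far-field value (\ref{farfield1}) and the bright peak (\ref{brightpeak}) by direct substitution.

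Finally I would verify that the discontinuous solitons are genuine distributional solutions and are admissible precisely where $\excitonenv$ jumps. Because the envelope is antisymmetric, $\photonenv$ is odd and $\photonenv'$ is even, so $\photonenv\in C^1$ across $x=0$ and $\photonenv''$ carries only a jump, no $\delta$; the jump of $\gamma\excitonenv$ in (\ref{envelopeODE1b}) is matched exactly by the jump of $\half\photonenv''$, so the distributional equation holds. This is admissible at $\zeta=\wx$ (where $\photonenv=0$) but not at the critical value $\zeta=\wx/3$, which is exactly the obstruction of part~(c): there $\photonenv''$ would acquire a $\delta$. The main obstacle is the bookkeeping of the third step --- the exhaustive but delicate analysis of the orderings of $0,\zeta_\infty,r,\wx/3$ so that precisely six disjoint bands with the stated endpoints emerge --- together with the exhaustiveness argument that no further orbit with far-field limits exists, in particular ruling out orbits forced to cross $\wx/3$ or to run off to infinity.
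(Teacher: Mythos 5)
Your proposal is correct and follows essentially the same route as the paper: far-field values are identified with double roots of $\zeta Q(\zeta)$ (pinning down $K$), orbits connect these to a simple root or to the jump value $\zeta=\wx$, and the bands emerge from the connectivity condition at $\zeta=\wx/3$ together with the sign condition, with the distributional check at the jump handled exactly as in section~\ref{sec:proofsolitons}. The paper's only real difference is cosmetic --- it passes to the normalized variables $\eta=\wx\wc/\gamma^2$ and $u=\zeta/\wx$ to tame the root-ordering bookkeeping you flag as the main remaining labor.
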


At the far field, the dark solitons and the discontinuous solitons tend to homogeneous solutions (\ref{homogeneous}) of the polariton equations, that is,
\begin{equation}
  \lim_{|x|\to\infty} g\excitonenv(x)^2=\zeta_\infty(\omega),
\end{equation}
with $\zeta_\infty$ given by (\ref{zetainfty}).  According to Theorem~\ref{thm:zeta}(b), $\zeta_\infty(\omega)$ is one of the stationary points of $Q(\zeta;\omega)$ ({\itshape i.e.}, $\partial Q(\zeta;\omega)/\partial\zeta=0$).

The peak values of the bright soliton amplitudes are expressed through one of the roots of $Q(\zeta)$ when $K=0$ (as will be demonstrated in section~\ref{sec:proofsolitons}), namely
\begin{equation}\label{zetazero}
  \textstyle
  \zeta_0(\omega) \,:=\, \wx - \frac{\gamma^2}{\wc}
                 \left[ \frac{3}{4} + \sqrt{\frac{9}{16} - \frac{\wx\wc}{2\gamma^2}\,}\, \right].
\end{equation}
Since $g<0$, $\zeta(x)$ is negative and, according to (\ref{brightpeak}), attains a minimal value of $\zeta_0$.

Band 1.1 of dark solitons coincides with the lower band of linear homogeneous polaritons $(\olp,\ox)$, and band 1.2 starts at the same minimal frequency $\oup$ as that of the upper band of linear homogeneous polaritons \cite[Fig.~1]{MarchettiSzymanska2011}.  The frequencies $\olp$ and $\oup$ are the roots of the quadratic $(\omega-\ox)(\omega-\oc)-\gamma^2$, with $\olp<\oup$.

Band 3.1 of discontinuous solitons for $g<0$ is unusual in that the exciton field exhibits a peak whereas the photon field exhibits a dip at the symmetry point (Fig.~\ref{fig:band3}).

Exact expressions by quadrature can be given for all solitons.  For the dark solitons of bands 1.1 and 1.2, one has
\begin{equation}\label{darksolution1}
  x = \pm \int_0^{\zeta(x)} \frac{3z-\wx}{\sqrt{8zQ(z)}\,}\, dz
  \qquad
  (0\leq\zeta(x)\leq\zeta_\infty).
\end{equation}
The exciton and photon fields are then obtained by
\begin{equation}\label{darksolution2}
\begin{split}
  \excitonenv(x) &\,=\, \pm\, \mathrm{sgn}(x) \sqrt{\frac{\zeta(x)}{g}}\,, \\
  \photonenv(x) &\,=\, \textstyle\frac{1}{\gamma} \left( \wx - g\excitonenv(x)^2 \right)\excitonenv(x)\,.
\end{split}
\end{equation}
In band 3.2, expression (\ref{darksolution1}) is modified by replacing the lower limit of integration by $\wx$ and allowing $\wx\leq\zeta(x)\leq\zeta_\infty$.  Similar expressions apply for the other soliton bands.

\subsection{Graphical depiction of solitons}\label{sec:graphs}

The figures in this section depict the soliton solutions of the form (\ref{stationarysolution2}) for the polariton equations (\ref{exciton},\ref{photon}).  The three types of solitons announced in Theorem~\ref{thm:solitons} are depicted in three separate figures below.

\smallskip
In Figures~\ref{fig:band1}, \ref{fig:band2}, and \ref{fig:band3}, assume that the symmetry point of each soliton (peak or nadir of the amplitude) is at $x=0$.  In each figure:

\begin{spslist}
  \item  The {leftmost diagram} shows two frequency bands of solitons on the $\omega$-axis of the $\omega$$\zeta$-plane.  At a chosen frequency in each band, an arrow spans the range of $\zeta$-values of a soliton, pointing toward the far-field value $\lim_{|x|\to\infty}\zeta(x)$.  The tail of the arrow, indicated by a solid dot, has its ordinate at $\zeta(0)$.  The point of the arrow, indicated by an open circle, has its ordinate at the far-field value of $\zeta(x)$.
  \item  The sign of $g$ coincides with the sign of $\zeta=g\excitonenv^2$.
  \item  The middle and rightmost diagrams depict the exciton and photon envelopes $\excitonenv(x)$ and $\photonenv(x)$ for each frequency corresponding to the arrows in the leftmost diagram.
  \item  The upper graphs depict the trajectory of the point $(\excitonenv(x),\photonenv(x))$ along the cubic relation (\ref{phirelation}) as $x$ traverses the real line.  The solid dots mark the central point $(\excitonenv(0),\photonenv(0))$, and the open circles mark the far-field values $\lim_{x\to\pm\infty}(\excitonenv(x),\photonenv(x))$.
  \item  The lower graphs depict the exciton and photon envelopes {\itshape vs.} the spatial variable $x$.  When one passes from $\zeta=g\excitonenv^2$ to $\excitonenv$, the extraction of square roots results in two solitons, which are minuses of each other.  One choice of square root is shown in the graphs.
  \item  In each figure, $\gamma=1$ and $\oc-\ox=1$.  The inequality $\oc>\ox$ is referred to as ``positive detuning".  The ``negative detuning" case $\oc<\ox$ is depicted in Fig.~\ref{fig:band123} (right) and in Fig.~\ref{fig:bandsneg}.
\end{spslist}

\begin{figure}[H]
\centerline{\scalebox{0.59}{\includegraphics{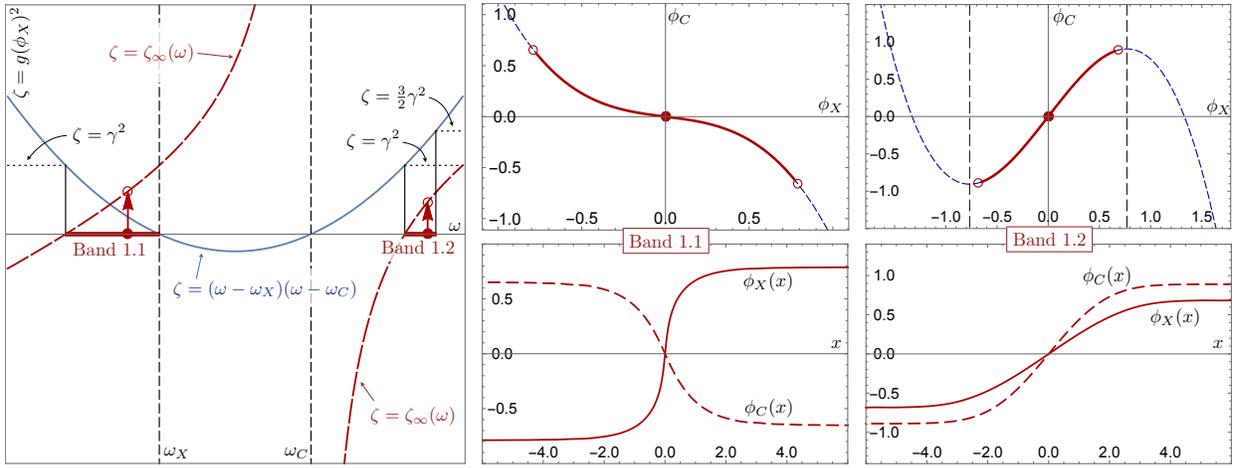}}}
\caption{\small {\bfseries Dark solitons.} Anti-symmetric dark solitons for nonlinearity coefficient $g>0$.
(See the bullet points above in section~\ref{sec:graphs} for a general explanation.) The far-field value $\zeta_\infty$ of $\zeta\!=\!g\excitonenv^2$, indicated by the open dots in the leftmost diagram and given by (\ref{zetainfty}), is equal to a double root of the cubic $Q(\zeta)=Q(\zeta;\omega)$ (see (\ref{Q_polynomial})) created by the appropriate choice of constant~$K=K(\omega)$.  In the upper graphs (middle and right), the pair $(\excitonenv(x),\photonenv(x))$ travels from one open circle to the other as $x$ travels from $-\infty$ to~$\infty$.}
\label{fig:band1}
\end{figure}

\begin{figure}[H]
\centerline{\scalebox{0.59}{\includegraphics{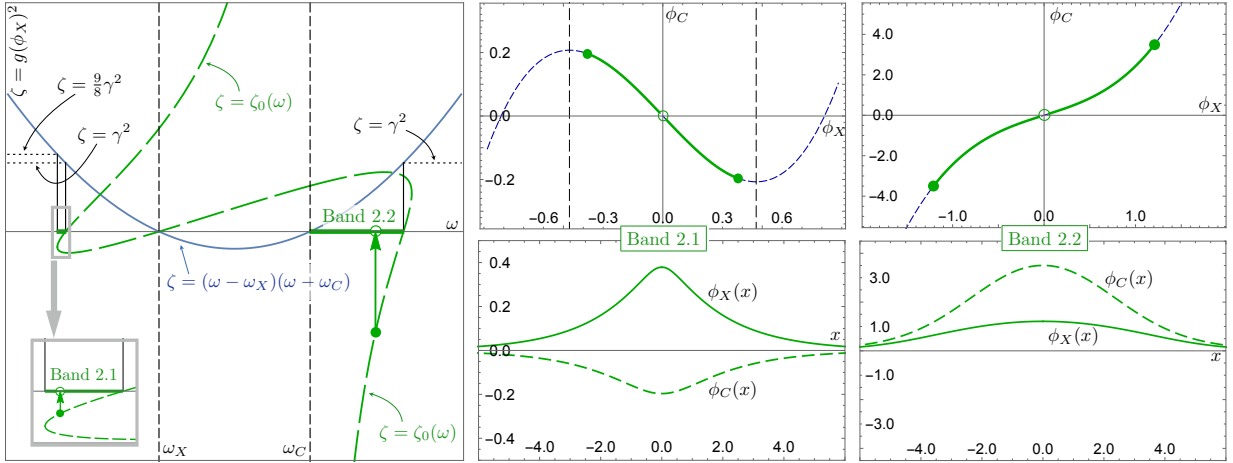}}}
\caption{\small {\bfseries Bright solitons.} Symmetric bright solitons for nonlinearity coefficient $g<0$. (See the bullet points above in section~\ref{sec:graphs} for a general explanation.)  The minimal value $\zeta_0$ of $\zeta(x)$, indicated by the solid dot in the leftmost diagram and given by (\ref{zetazero}), is at a simple root of $Q(\zeta)=Q(\zeta;\omega)$ when $K=0$ so that $\zeta Q(\zeta)$ has a double root at $\zeta=0$.  In the upper graphs (middle and right), the pair $(\excitonenv(x),\photonenv(x))$ travels from the open circle to one of the solid dots and back as $x$ travels from $-\infty$ to~$\infty$.}
\label{fig:band2}
\end{figure}

\begin{figure}[H]
\centerline{\scalebox{0.59}{\includegraphics{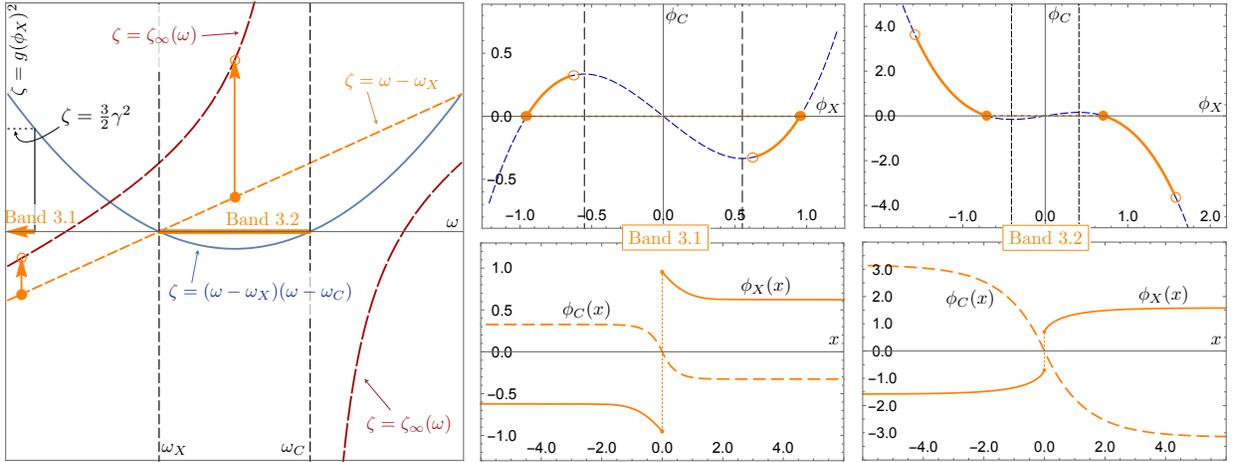}}}
\caption{\small {\bfseries Discontinuous solitons.} Anti-symmetric solitons for which the exciton envelope $\excitonenv(x)$ is discontinuous at its point of symmetry. (See the bullet points above in section~\ref{sec:graphs} for a general explanation.)  The far-field value $\zeta_\infty$ of $\zeta\!=\!g\excitonenv^2$, indicated by the open dots in the leftmost diagram and given by (\ref{zetainfty}), is equal to a double root of the cubic $Q(\zeta)=Q(\zeta;\omega)$ (see (\ref{Q_polynomial})) created by the appropriate choice of constant~$K=K(\omega)$.  In the upper graphs (middle and right), as $x$ travels from $-\infty$ to $\infty$, the pair $(\excitonenv(x),\photonenv(x))$ travels along the cubic from an open circle to a solid dot, then jumps to the other solid dot, and then travels along the cubic to the other open circle.}
\label{fig:band3}
\end{figure}

\begin{figure}[H]
\scalebox{0.51}{\includegraphics{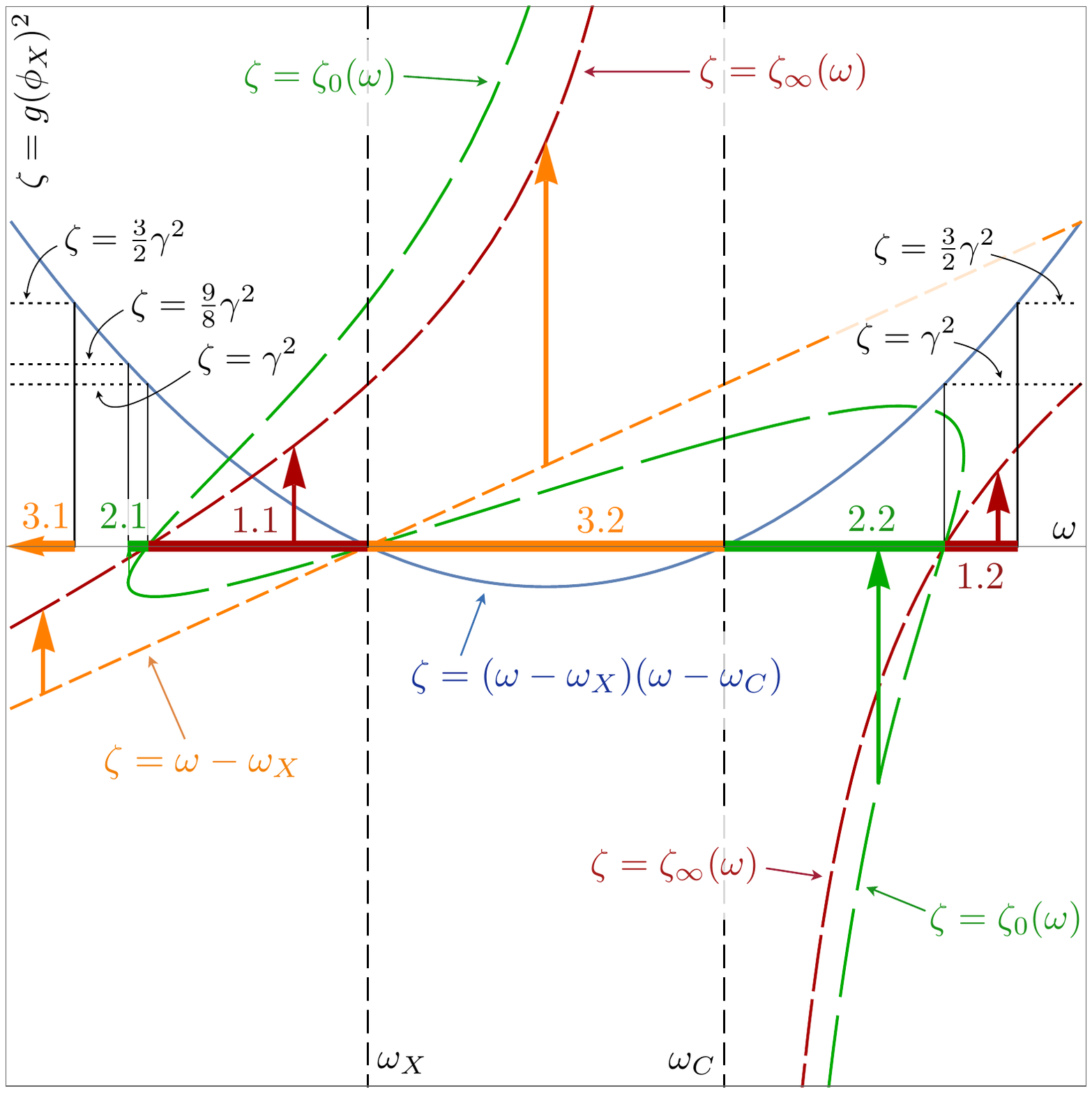}}
\hfill
\scalebox{0.51}{\includegraphics{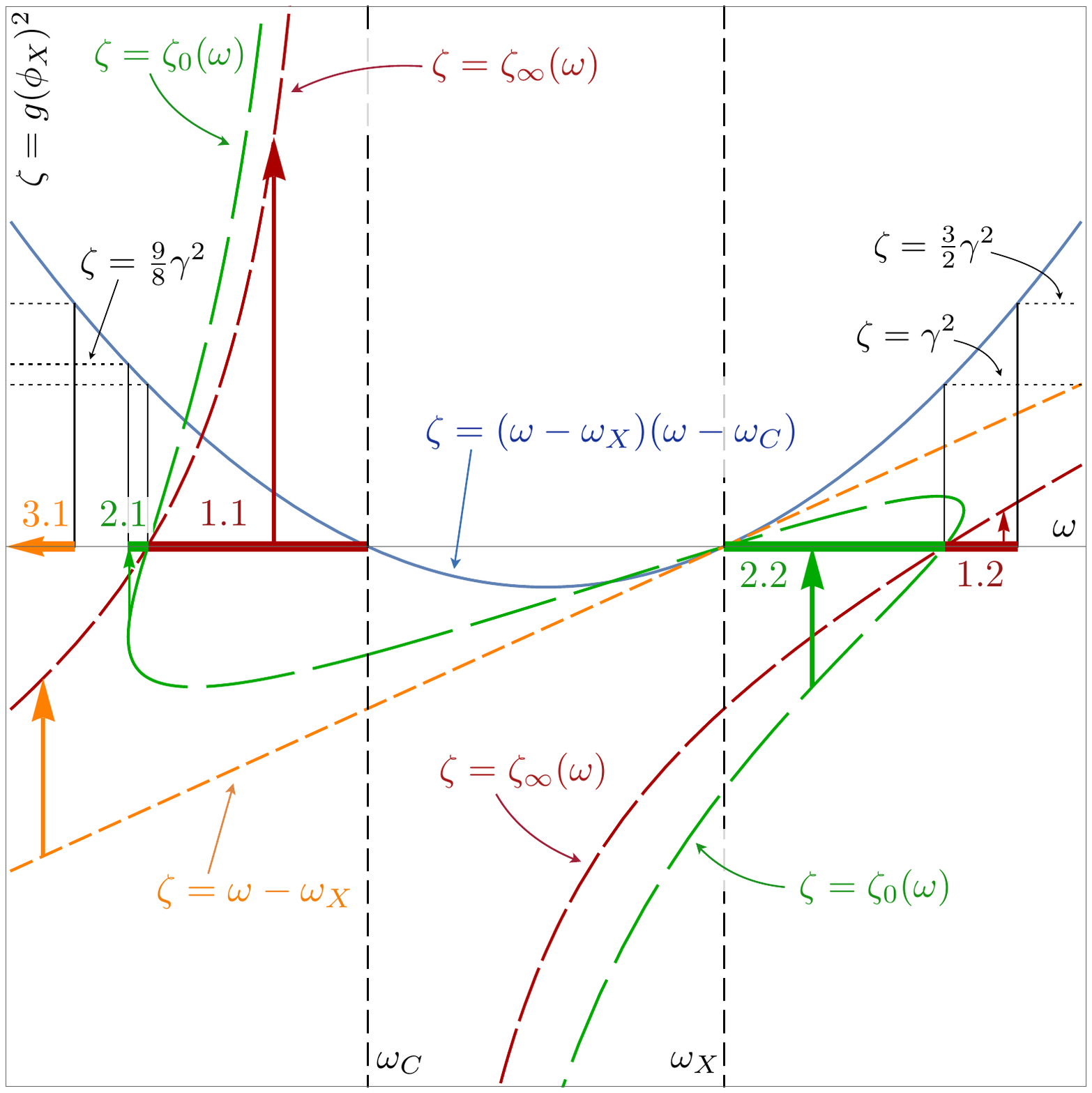}}
\caption{\small {\bfseries Left:} A superposition of the leftmost diagrams of Figures~\ref{fig:band1}, \ref{fig:band2}, and \ref{fig:band3}, showing all bands simultaneously in the positive detuning case, $\oc>\ox$.  A given polariton system admits either bands 1.1, 1.2, and 3.2 if $g>0$ or bands 2.1, 2.2 and 3.1 if $g<0$.
{\bfseries Right:} In the negative detuning case $\oc<\ox$, the band 3.2 of discontinuous solitons is~absent.}
\label{fig:band123}
\end{figure}

\begin{figure}[H]
\centerline{\scalebox{0.58}{\includegraphics{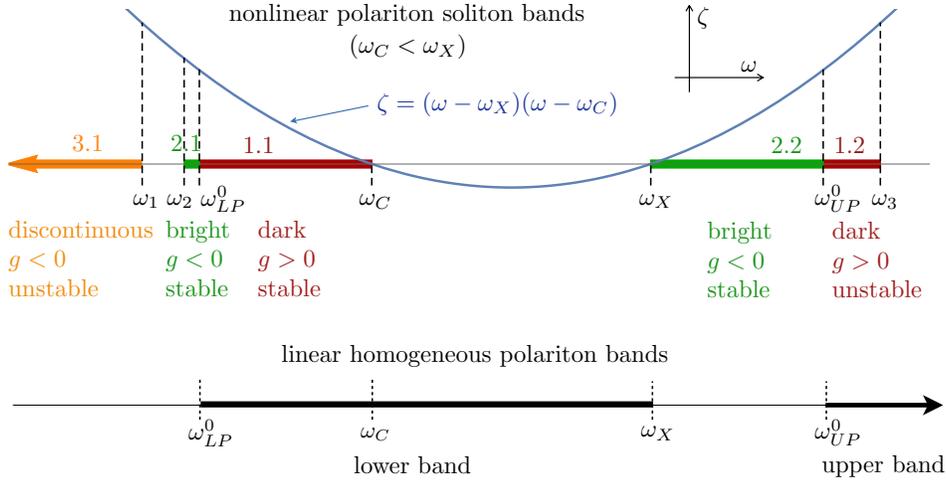}}}
\caption{\small This is the analogous figure to Fig.~\ref{fig:bands} in the negative detuning case $\oc<\ox$.  Note the absence of band 3.2.
}
\label{fig:bandsneg}
\end{figure}

\subsection{A band of continuous and discontinuous dark solitons}\label{subsec:bandD}

Bands 1.1 and 3.2 merge to form a larger band of dark solitons for $g>0$, which we call band~D.  This band was reported by the same authors in \cite{KomineasShipmanVenakides2015}.  It consists of the interval $(\olp,\oc)$, where $\olp$ is defined by
\begin{equation}\label{olp}
  (\olp-\ox)(\olp-\oc)=\gamma^2,
  \qquad
  \olp \,<\, \min\{\ox,\oc\},
\end{equation}
and coincides with the lower endpoint of a well-known band of homogeneous (constant in~$x$) ``lower polaritons" \cite[Fig.~1]{MarchettiSzymanska2011} for the associated linear system obtained by setting $g=0$ and keeping all other parameters unchanged.  The far-field amplitude of the soliton is given by~$\zeta(x)\to\zeta_\infty$ as $|x|\to\infty$, or
\begin{equation}\label{darkamp}
  g\excitonenv^2 \;\;\to\;\; \omega-\ox - \frac{\gamma^2}{\omega-\oc}
  \qquad
  \text{as }\;\; |x|\to\infty,
\end{equation}
and ranges from $0$ to $\infty$ as $\omega$ traverses the band $(\olp,\oc)$.
In the case of negative detuning ($\oc<\ox$), the discontinuous band 3.2 vanishes and the dark soliton is continuous on the entire band~D.  In the case of positive detuning ($\ox<\oc$), the exciton frequency $\ox$ lies within band~D and marks the transition from band 1.1 to band 3.2, where the soliton becomes discontinuous.

Thus, in the positive-detuning case, the frequencies $\olp$, $\ox$, and $\oc$ have the following significance for soliton band~D:

\begin{spslist}
  \item  The value $\olp$ is the {\em threshold frequency} that marks the onset of a soliton.  For frequencies just above this threshold ($0<\omega-\olp\ll1$), the soliton amplitude is small.  This can be seen from equation (\ref{darkamp}), which shows that the soliton amplitude vanishes when $\omega=\olp$.  Thus, solitons at frequencies near the lower edge of the band are in the linear regime because the nonlinearity $g|\excitonenv|^2$ is negligible.
  \item  The exciton frequency $\ox$ is the {\em transition frequency}, at which the exciton field of the soliton becomes discontinuous, as shown in Fig.~\ref{fig:bandD}.  As $\omega$ exceeds $\ox$, the quantity $\wx$ changes from negative to positive and the cubic relation between $\photonenv$ and $\excitonenv$ gains two nonzero roots at $\excitonenv=\pm\phi_0:=\pm\sqrt{\wx/g}\,$ (Fig.~\ref{fig:cubic}, second row).  The exciton field jumps between these two roots exactly when the photon field vanishes, as shown in the rightmost graphs of Fig.~\ref{fig:bandD}.
  \item  The photon frequency $\oc$ is the {\em blowup frequency}: as $\omega$ goes up to $\oc$, the far-field amplitude (\ref{darkamp}) tends to infinity.
\end{spslist}

At the transition frequency $\ox$, the exciton field $\excitonenv(x)$ experiences an infinite slope when its value equals zero, and the graph of $|\excitonenv(x)|^2$ has a cusp (Fig.~\ref{fig:bandDwx}).
This soliton was found numerically in \cite[Fig.~9]{SalasnichMalomedToigo2014}.  Our equations (\ref{darksolution1}--\ref{darksolution2}) give an exact analytic expression of this soliton.

In the negative-detuning case, the threshold and blowup frequencies persist, but the soliton undergoes no transition to discontinuity.

\begin{figure}[H]
\centerline{\scalebox{0.59}{\includegraphics{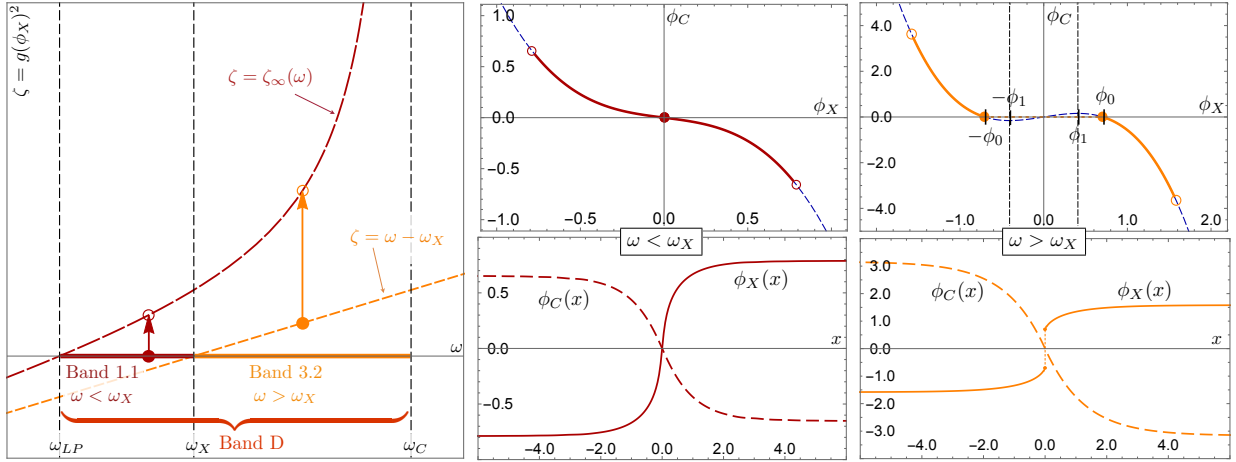}}}
\caption{\small When $\oc>\ox$ (positive detuning), band~1.1 of continuous dark solitons and band~3.2 of discontinuous dark solitons merge to form the single band~D of dark solitons.  (See the bullet points in section~\ref{sec:graphs} for a general explanation.)  Their far-field value of $g\excitonenv^2=\zeta_\infty$ is represented by the single expression $Q(\zeta_\infty)=0$ when the constant $K$ in (\ref{Q_polynomial}) is chosen so that $Q$ has a double root at $\zeta_\infty$; it is given explicitly by (\ref{darkamp}).  The frequency $\olp$ is the threshold frequency, marking the onset of the soliton; the exciton frequency $\ox$ marks the transition from continuous to discontinuous exciton field; and the far-field amplitude of the soliton blows up as $\omega$ goes up to the photon frequency $\oc$.  The values $\pm\phi_0$ are the roots of $\photonenv$ {\itshape vs.} $\excitonenv$, and $\pm\phi_1$ are the critical (local maximum and minimum) points.}
\label{fig:bandD}
\end{figure}

\begin{figure}[H]
\centerline{\scalebox{0.6}{\includegraphics{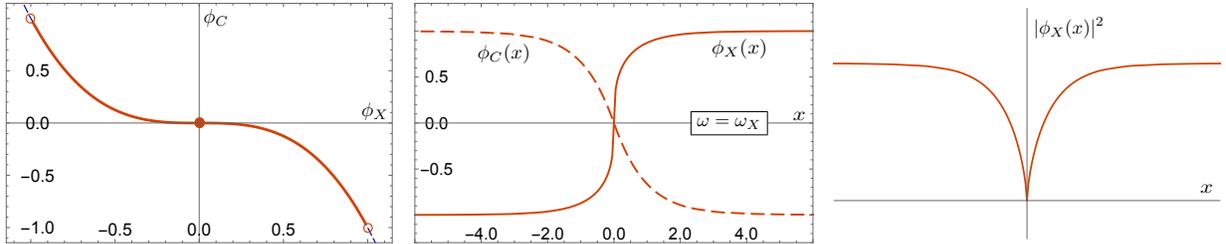}}}
\caption{\small A dark soliton at the transition frequency $\omega=\ox$.  When $\oc>\ox$ (positive detuning) the exciton frequency $\ox$ lies within band D and marks where the exciton field $\excitonenv$ transitions from continuous to discontinuous, as illustrated in Fig.~\ref{fig:bandD}.  The soliton at $\omega=\ox$ is continuous, and the graph of $|\photonenv(x)|^2$ has a cusp at its nadir.}
\label{fig:bandDwx}
\end{figure}

\section{Soliton linear stability}\label{sec:stability}

The polariton  system of equations (\ref{exciton}--\ref{photon}), linearized about a soliton solution, has coefficients that reflect the spatial dependence of the soliton wavefunction.  Thus, an exact linear stability analysis cannot be based on the growth/decay of individual space-harmonic perturbations.  Utilizing the Laplace transform in time, we reduce the soliton linear stability to the invertibility of  a two-component, time-independent  Schr\" odinger operator  in the independent variable $x$.  The operator has  a  matrix potential  $M(x, \tau)$, that carries the soliton information and is required to be invertible for all nonreal values of the parameter $\tau=is$, where $s$ is the Laplace independent variable. The time-symmetry of the problem, arising from the losslessness, makes the classic stability requirement for $\Im \tau>0$ equivalent to the invertibility of the operator $-\partial_{xx} + M(x,\tau)$ for all nonreal $\tau$.   As $|x|$ tends to infinity, the matrix potential $M(x, \tau)$ approaches exponentially an $x$-independent matrix $M(\infty, \tau)$; in order to check the invertibility of the Schr\" odinger  operator, one has to show that it has no  bounded null eigenfunctions.  This is a challenging problem that is currently under study. We present a complete analysis of the linear stability of the soliton far fields in section~\ref{sec:far stability}.

\subsection{Formulation of the stability problem}
 
Inserting $\exciton=\tilde\exciton e^{-i\omega t}$ and $\photon=\tilde\photon e^{-i\omega t}$ into (\ref{exciton},\ref{photon}) yields
\begin{eqnarray}
  i\partial_t\tilde\exciton &=& \left(-\wx + g|\exciton|^2\right)\tilde\exciton + \gamma\tilde\photon\,, \label{excitontilde}\\
  i\partial_t\tilde\photon &=& \left( -\wc - \half \partial_{xx} \right) \tilde\photon + \gamma\tilde\exciton\,. \label{photontilde}
\end{eqnarray}
By taking $\tilde\exciton$ and $\tilde\photon$ to be perturbed envelope functions
\begin{align}
  \tilde\exciton(x,t) &=\excitonenv(x)+\xi_X(x,t)\,, \\
  \tilde\photon(x,t) &=\photonenv(x)+\xi_C(x,t)\,,
\end{align}
one obtains equations for the perturbations $(\xi_X,\xi_C)$,
\begin{eqnarray}
  i\partial_t \xi_X &=& -\wx\xi_X + g\excitonenv^2 \left( 2\xi_X + \bar\xi_X \right) + \gamma \xi_C + \text{h.o.t.}\,,\\
  i\partial_t \xi_C &=& \left( -\wc - \half \partial_{xx} \right) \xi_C + \gamma \xi_X\,,
\end{eqnarray}
in which the omitted terms are higher than linear order in $\xi_X$ and $\bar\xi_X$.
By taking the Laplace transform of these equations and their conjugates, one obtains the system
\begin{eqnarray}
  && \left( is + \wx - 2g\excitonenv^2 \right) \hat \xi_X - g\excitonenv^2\hat{\bar\xi}_X - \gamma\hat\xi_C \,=\, 0\\
  && \left( -is + \wx - 2g\excitonenv^2 \right) \hat{\bar\xi}_X - g\excitonenv^2\hat{\xi}_X - \gamma\hat{\bar\xi}_C \,=\, 0\\
  && \left( is + \wc + \half \partial_{xx} \right) \hat\xi_C - \gamma\hat\xi_X \,=\, 0\\
  && \left( -is + \wc + \half \partial_{xx} \right) \hat{\bar\xi}_C - \gamma\hat{\bar\xi}_X \,=\, 0
\end{eqnarray}
or, in matrix form,
\begin{equation}\label{LinearStability1}
  \renewcommand{\arraystretch}{1.5}
\left[
  \begin{array}{cccc}
    2g\excitonenv^2-\wx-is & g\excitonenv^2 & \gamma & 0 \\
    g\excitonenv^2 & 2g\excitonenv^2-\wx+is & 0 & \gamma \\
    \gamma & 0 & -\partial_{xx}-\wc-is & 0 \\
    0 & \gamma & 0 & -\partial_{xx}-\wc+is
  \end{array}
\right]
\renewcommand{\arraystretch}{1.5}
\left[
  \begin{array}{c}
    \hat\xi_X \\
    \hat{\bar\xi}_X \\
    \hat\xi_C \\
    \hat{\bar\xi}_C
  \end{array}
\right]\,=\,0\,.
\end{equation}
Notice that the field $(\excitonenv,\photonenv)$ occurs in the matrix only through $\zeta(x)=g\excitonenv(x)^2$. 

We write the inhomogeneous version of the linearized problem (\ref{LinearStability1}) as
\begin{equation}\label{LinearStability2}
  \mat{1.2}{A}{\gamma}{\gamma}{B}
  \col{1.2}{\vec{\xi}_X}{\vec{\xi}_C}
  = \col{1.2}{\vec{f_1}}{\vec{f_2}}\,,
\end{equation}
in which
\begin{equation}
  \left[{\vec{\xi}_X},\,{\vec{\xi}_C}\right]^T \,=\, \left[\hat\xi_X,\,\hat{\bar\xi}_X,\,\hat\xi_C,\,\hat{\bar\xi}_C\right]^T
\end{equation}
and $A$ and $B$ are the matrix operators
\begin{equation}
  A = \mat{1.1}{\alpha-\tau}{\zeta}{\zeta}{\alpha+\tau},
  \qquad
  B = \mat{1.1}{-\partial_{xx}-\wc-\tau}{0}{0}{-\partial_{xx}-\wc+\tau}\,,
\end{equation}
where 
\begin{equation}
 \alpha := 2g\excitonenv^2-\wx= 2\z-\wx
\end{equation}
and  $\tau = is$ is a complex frequency.
Equation (\ref{LinearStability2}) can be solved for the photon component:
\begin{equation}\label{LinearStability2a}
  \left(B-\gamma^2 A^{-1}\right) \vec{\xi}_C \,=\, \vec{f_2} - \gamma A^{-1}\vec{f_1}\,.
\end{equation}
The operator on the left is a vector Schr\"odinger operator,
\begin{eqnarray}\label{Schroedinger1}
  B-\gamma^2 A^{-1} &=& -\partial_{xx} - \mat{1.1}{\wc+\tau}{0}{0}{\wc-\tau}
    - \frac{\gamma^2}{\alpha^2-\zeta^2-\tau^2} \mat{1.1}{\alpha+\tau}{-\zeta}{-\zeta}{\alpha-\tau}. 
\end{eqnarray}
Thus, equation \eqref{LinearStability2a} becomes 
\begin{equation}\label{LinearStability2b}
 -(\partial_{xx} + M(x,\tau) \vec{\xi}_C  \,=\, \vec{f_2} - \gamma A^{-1}\vec{f_1}\,,
\end{equation}
where
\begin{equation}
    M(x,\tau) \,=\,  \mat{1.1}{\wc+\tau}{0}{0}{\wc-\tau}
    + \frac{\gamma^2}{\alpha(x)^2-\zeta(x)^2-\tau^2} \mat{1.1}{\alpha(x)+\tau}{-\zeta(x)}{-\zeta(x)}{\alpha(x)-\tau}\,.
\end{equation}
In the definition of $M(x,\tau)$, both $\zeta(x)$ and $\alpha(x)=2\zeta(x)-\wx$ are functions of $x$ that exponentially converge to limiting values as $x\to\pm\infty$.  
Thus the matrix potential $M(x,\tau)$ is an exponentially localized perturbation of its large-$|x|$ value $M(\infty,\tau)$.  

\vskip.3cm

{\em {\bfseries Linear stability} of a solution $(\photon(x),\exciton(x))$, corresponding to $\zeta(x)$ is obtained if, for all $\tau$ in the upper half-plane, the vector Schr\"odinger operator 
$\partial_{xx}+M(x,\tau)$ admits no extended states and no bound states}, that is, bounded vector functions $\vec{\xi}_C(x)$ satisfying the  equation 
\begin{equation}\label{Schroedinger3}
  \left( \partial_{xx} +M(x,\tau) \right) \vec{\xi}_C  \,=\, 0\,.  
\end{equation}

\subsection{Linear stability of the soliton far fields}\label{sec:far stability}

Soliton {\em far-field solutions}, or simply soliton far fields, are  homogeneous ($x$-independent and time-harmonic) solutions of the polariton system of equations (\ref{exciton}--\ref{photon}), that  are  asymptotic to soliton  solutions of the   system as $x$ tends to $\pm\infty$.  They have form  $(\excitonenv^*,\photonenv^*)e^{-i\omega t}$, where $\omega$ is the frequency  and the pair $(\excitonenv^*,\photonenv^*)$ is an equilibrium point of the ODE system (\ref{envelopeODE1a})--(\ref{envelopeODE1b}).

We have seen that the value $\zeta_*$ of the variable $\zeta=g\phi_X^2$ of a soliton far field is independent of  whether $x\to\infty$ or $x\to-\infty$.  The value $\zeta_*$ is a double root of the quartic polynomial $\zeta Q(\zeta)$ in the ODE  \eqref{zetaODE}.  This root takes on one of two values for any given soliton. One value, $\zeta_*=0$ (simple root of $Q(\zeta)$ when $K=0$) is taken by  the solitons of  frequency bands 2.1 and 2.2.   The other value,  $\zeta_*=\zeta_\infty$ (double root of $Q(\zeta)$) is taken  over the frequency bands 1.1, 1.2, 3.1 and  3.2.  Nonzero simple roots of the cubic $Q(\zeta)$ do not correspond to soliton far-field values, and they do not satisfy the original system (\ref{envelopeODE1a}, \ref{envelopeODE1b}).
They are generated from the derivation of the ODE \eqref{zetaODE} for $\zeta$, which  involves multiplying (\ref{envelopeODE1a}) and (\ref{envelopeODE1b}) by the derivatives $\phi_X'$ and $\phi_C'$, which vanish when $\phi_X$ and $\phi_C$ are constant. 

For soliton far fields, the Schr\" odinger equation \eqref{Schroedinger3} reduces to the constant-coefficient problem
\begin{equation}\label{Schroedinger4}
  \left( \partial_{xx} +M(\infty,\tau) \right) \vec{\xi}_C(x) \,=\, 0\,,
\end{equation}
by putting $\zeta(x)=\zeta_*$ and $\alpha(x)=\alpha_*$.
The {\em far-field dispersion relation}, relating wave number $k$ and  frequency $\tau$ as $|x|\to\infty$   for the linearization (\ref{Schroedinger4}), is obtained by replacing $\partial_{xx}$ with $-k^2$ and setting the determinant to zero,
\begin{multline}\label{dispersion2}
 D(k,\tau)= \det\left(k^2 I - M(\infty,\tau)\right) \\
 = \tau^4-\left(\alpha_*^2+\bsq+2\gsq-\zeta_*^2\right)\tau^2+\left(\g^4-2\alpha_*\bb\gsq+\bsq(\alpha_*^2-\zeta_*^2)\right)  = 0\,
\end{multline}
in which $\beta = k^2-\wc$ and $\alpha_*=2\zeta_*-\wx$.  $D(k,\tau)$ is a function of $k^2$ and $\tau^2$.

The stability condition for homogeneous solutions obtained in the previous section can be rephrased as follows: For all $\tau\in\mathbb{C}\setminus\mathbb{R}$, {\em the matrix $M(\infty,\tau)$ has no eigenvalues $k^2$ in $[0,\infty)$, or, equivalently, $D(k,\tau)\not=0$ for such $\tau$ and $k$.}

The proof of the following theorem is given in section~\ref{sec:proofstability}.

\begin{theorem}\label{thm:stability}
Let $(\exciton(x,t),\photon(x,t))=(\excitonenv(x),\photonenv(x))e^{-i\omega t}$ be a solution of the nonlinear polariton system (\ref{exciton},\ref{photon}) such that $(\excitonenv(x),\photonenv(x))$ has limits as $x\to\pm\infty$.  All such solutions are described in Theorem~\ref{thm:solitons}, and one has
\begin{equation*}
  (\excitonenv(x),\photonenv(x)) \to \pm(\excitonenv^*,\photonenv^*)
  \quad\text{as}\quad
   x\to\pm\infty.
\end{equation*}
The function pair $(\excitonenv^*,\photonenv^*)e^{-i\omega t}$ is a homogeneous solution to (\ref{exciton},\ref{photon}) that is linearly
\begin{eqnarray*}
  \text{stable} &\text{if}& \omega\in \ \text{band 1.1, 2.1, 2.2, or 3.2,}\\
  \text{unstable} &\text{if}& \omega\in \ \text{band 1.2 or 3.1.}
\end{eqnarray*}
In bands 2.1 and 2.2, $(\excitonenv^*,\photonenv^*)=(0,0)$, and in the other bands, $(\excitonenv^*,\photonenv^*)=\pm(\excitonenv^\infty,\photonenv^\infty)$, defined in (\ref{eqsoln}).
\end{theorem}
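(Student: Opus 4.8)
The plan is to start from the reduced stability criterion established just above the theorem: the far field $(\excitonenv^*,\photonenv^*)e^{-i\omega t}$ is linearly stable if and only if, for every $\tau\in\mathbb{C}\setminus\mathbb{R}$, the matrix $M(\infty,\tau)$ has no eigenvalue $k^2\in[0,\infty)$, i.e. $D(k,\tau)\neq0$ for all real $k$ and all nonreal $\tau$. Since $D$ is a real polynomial that is quadratic in $\tau^2$, I would write $D(k,\tau)=\tau^4-S\tau^2+P$ with $S=\alpha_*^2+\beta^2+2\gamma^2-\zeta_*^2$ and $P=\gamma^4-2\alpha_*\beta\gamma^2+\beta^2(\alpha_*^2-\zeta_*^2)$, where $\beta=k^2-\wc$ and $\alpha_*=2\zeta_*-\wx$. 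For fixed real $k$ the two roots $\tau^2$ are real and nonnegative (equivalently $\tau$ is real) precisely when $S\geq0$, $P\geq0$, and the discriminant $\Delta:=S^2-4P\geq0$. Thus stability is equivalent to these three inequalities holding for all $\beta\geq-\wc$ (the range $k^2\geq0$), and instability to the failure of one of them for some such $\beta$.

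I would first dispose of the bright bands 2.1 and 2.2, where $\zeta_*=0$ and $\alpha_*=-\wx$: here $P=(\gamma^2+\wx\beta)^2$, $S=\wx^2+\beta^2+2\gamma^2$, and a difference-of-squares factorization gives $\Delta=(\beta-\wx)^2\big[(\beta+\wx)^2+4\gamma^2\big]$, all manifestly nonnegative for every $\beta$, so the zero far field is stable. For the remaining bands $\zeta_*=\zeta_\infty$, and the single most useful computation is $\mu:=\alpha_*^2-\zeta_*^2=\gamma^2(3\gamma^2-2p)/\wc^2$ with $p:=\wx\wc$, so that $\operatorname{sgn}\mu=\operatorname{sgn}(3\gamma^2-2p)$. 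In band 3.1 one has $p>\tfrac32\gamma^2$, hence $\mu<0$ and $P(\beta)\to-\infty$ as $\beta\to+\infty$; for large $k$ the product of the two $\tau^2$ roots is negative, forcing a negative root and a purely imaginary $\tau$, so band 3.1 is unstable.

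In bands 1.1, 1.2, 3.2 one has $\mu>0$, whence $S=\beta^2+\mu+2\gamma^2>0$ automatically. A direct computation shows $P(-\wc)=0$ (the zero mode at $k=0$ coming from phase invariance), so the other root of the upward parabola $P$ is $\beta_2=\gamma^2\wc/(2p-3\gamma^2)$, and $P\geq0$ on $[-\wc,\infty)$ reduces to $\beta_2\leq-\wc$, i.e. to $\wc(p-\gamma^2)\geq0$; checking the signs of $\wc$ and $p-\gamma^2$ band by band confirms this in all three. Consequently $S$ and $P$ never cause instability here, and stability is decided entirely by the sign of $\Delta$ on $\beta\geq-\wc$. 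The decisive structural fact is that bands 1.1 and 3.2 have $\wc<0$ (so $-\wc>0$) and, by a quick sign check, $\alpha_*>0$, whereas band 1.2 has $\wc>0$ (so $-\wc<0$) and $\alpha_*<0$. For bands 1.1 and 3.2 I would prove $\Delta>0$ on all of $[0,\infty)\supseteq[-\wc,\infty)$ by discarding the nonnegative linear term $8\alpha_*\gamma^2\beta$ and bounding the remainder $g(\beta^2)=(\beta^2)^2-2(\mu-2\gamma^2)\beta^2+\mu(\mu+4\gamma^2)$ from below: its discriminant in $\beta^2$ equals $16\gamma^2(\gamma^2-2\mu)$, which is negative once $\mu>\tfrac12\gamma^2$ (making $g>0$), while for $\mu\leq\tfrac12\gamma^2$ every coefficient of $g$ is nonnegative; hence $\Delta>0$ and these bands are stable.

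The crux, and the step I expect to be the main obstacle, is proving instability in band 1.2, i.e. exhibiting a single $\beta\geq-\wc$ at which $\Delta<0$. Evaluating $\Delta$ at $\beta=-\alpha_*>0$ (where the term $(\alpha_*+\beta)^2\tau^2$ in $D$ drops out) collapses to the clean identity $\Delta(-\alpha_*)=\zeta_*^2(\zeta_*^2-4\gamma^2)$, negative exactly when $|\zeta_*|<2\gamma$, which settles the generic case. However, for large detuning $\zeta_*=\zeta_\infty$ can exceed $2\gamma$ and this witness fails; a second evaluation near the minimizer $\beta=\sqrt{\mu-2\gamma^2}$ of the quartic $\Delta$ then works, but the two witnesses leave a thin parameter sliver uncovered. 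A clean proof therefore requires locating the minimum of the quartic $\Delta(\beta)$ over $[-\wc,\infty)$ and showing it is negative uniformly across band 1.2, a genuine quartic sign analysis in which the constraints $\gamma^2<p<\tfrac32\gamma^2$ and $\wc>0$ must be used in full. This is the technical heart of the theorem; the bright, band-3.1, and stable-band arguments above are comparatively routine.
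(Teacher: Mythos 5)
Your overall strategy is the same as the paper's: reduce far-field linear stability to the requirement that, for every admissible $\beta=k^2-\wc\ge-\wc$, the quadratic in $\tau^2$ (equivalently in $s^2$) have two real roots of the correct sign, which splits into the three conditions on the sum, the product, and the discriminant. Your handling of bands 2.1 and 2.2 (everything is a perfect square or a sum of squares when $\zeta_*=0$), of band 3.1 (the leading coefficient $\alpha_*^2-\zeta_*^2$ of $P$ is negative when $\wx\wc>\tfrac32\gamma^2$, so $P<0$ for large $\beta$ --- the paper instead exhibits failure of the same product condition at small $k^2$, but both witnesses are valid), and of the sum and product conditions on bands 1.1, 1.2, 3.2 (including the identity $P(-\wc)=0$) all check out. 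Your lower bound for the discriminant on bands 1.1 and 3.2 --- discard the term $8\alpha_*\gamma^2\beta\ge0$ and bound the remaining biquadratic --- is correct and is a mild repackaging of the paper's argument, which instead factors $(\alpha_*+\beta)^2-\zeta_*^2$ into two factors that are individually positive when $\wc<0$.

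The genuine gap is exactly where you flag it: you never actually prove instability in band 1.2, and the quartic minimization you defer to is unnecessary. Use the paper's recast form $\Delta=(\alpha_*^2-\beta^2-\zeta_*^2)^2+4\gamma^2\bigl((\alpha_*+\beta)^2-\zeta_*^2\bigr)$ and choose the single witness $\beta_\star=\sqrt{\mu}$ with $\mu=\alpha_*^2-\zeta_*^2>0$, which kills the perfect square outright, leaving $\Delta(\beta_\star)=4\gamma^2(\alpha_*+\beta_\star+\zeta_*)(\alpha_*+\beta_\star-\zeta_*)$. Writing $\eta=\wx\wc/\gamma^2\in(1,\tfrac32)$ and $v=3-2\eta\in(0,1)$, one has $\alpha_*=\tfrac{\gamma^2}{\wc}(\eta-2)$, $\zeta_*=\tfrac{\gamma^2}{\wc}(\eta-1)$, $\beta_\star=\tfrac{\gamma^2}{\wc}\sqrt{v}$, so the two factors are $\tfrac{\gamma^2}{\wc}(\sqrt{v}-v)>0$ and $\tfrac{\gamma^2}{\wc}(\sqrt{v}-1)<0$; hence $\Delta(\beta_\star)<0$ uniformly over band 1.2, and $\beta_\star>0>-\wc$ is admissible. (Equivalently: $\beta_\star$ lies strictly between the two roots $-(\alpha_*+\zeta_*)$ and $-(\alpha_*-\zeta_*)$ of the second factor because $v<\sqrt{v}<1$.) This is also the natural completion of the paper's own write-up, which observes that $(\alpha_*+\beta)^2-\zeta_*^2$ goes negative between its roots but does not explicitly note that the squared term can simultaneously be made to vanish; your evaluation at $\beta=-\alpha_*$, giving $\zeta_*^2(\zeta_*^2-4\gamma^2)$, fails precisely when $|\zeta_*|\ge2\gamma$ (large positive detuning), so as it stands your argument leaves the instability claim --- half of the theorem's content for band 1.2 --- unproved.
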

\medskip

{\it Remark:}  The determinant \eqref{dispersion2} can be obtained directly (as a function of $s^2$ as opposed to the above $\tau^2=-s^2$) as the determinant of the matrix \eqref{LinearStability1}, in which $\partial_{xx}$ is replaced by $-k^2$. The stability condition, rephrased for the $s^2$ variable is: {\em Linear stability at all modes $k$ requires that the two roots of the determinant $D$, considered as a quadratic in the variable $\ssq$, be negative or zero for all real values of $k$, {\it i.e.}  for all values of $\bb$ that satisfy $\bb\ge-\wc$.} We use this approach in the proof in section~\ref{sec:proofstability}.

\section{Proof of Theorem~\ref{thm:solitons}: derivation of solitons}\label{sec:proofsolitons}

This section contains the proofs of the six classes of solitons described in Theorem~\ref{thm:solitons} of section~\ref{sec:solitons}.

The derivation of these solitons is simplified by passing to a normalized frequency variable $\eta$ that conveniently parameterizes the operating frequency $\omega$,
\begin{equation}\label{definition_eta}
 \eta=\textstyle\frac{1}{\gsq}(\omega-\ox)(\omega-\oc)=\textstyle\frac{\wx\wc}{\gsq}. 
\end{equation}
This expression, which is quadratic in $\omega$, produces generically two frequencies for the same value of $\eta$, a first indication of the fact  that exciton-polariton soliton solutions typically come in pairs.  One only needs to consider values
\begin{equation}
 \eta\;\ge\;\eta_\text{min}= \textstyle\frac{(\ox-\oc)^2}{4\gsq}, 
\end{equation}
that are at or above the minimum $\eta_\text{min}$ of the quadratic and thus produce real frequencies.

The non-dimensional frequency $\wx$ is a natural scaling factor for $\zeta$:
\begin{equation}
\z= \wx u\,, \ \ \ \
\z_\infty=\wx u_\infty\,, \ \ \ \  
u_\infty=1-\frac{1}{\eta}\,.
\end{equation}
Using $u$ instead of $\zeta$ greatly simplifies the algebraic computations of solitons.
In the new variables, the ODE (\ref{zetaODE}) for $\z$ becomes an ODE for $u$,
\begin{equation}\label{ode_u}
\begin{split}
 (3u-1)^2u'^2 &=-8\wc u \left[u^3-\left(2-\textstyle\frac{3}{2\eta}\right)u^2+\left(1-\textstyle\frac{1}{\eta}\right)u +\tilde K\right],  \\
   &= -8\wc u\, B(u)
   \qquad \mbox{when $\wx\ne 0$}.
\end{split}
\end{equation}
The polynomial $B(u)$ is related to $Q(\zeta)$ (see~(\ref{Q_polynomial})) by
\begin{equation}
  B(u) = \frac{-1}{\wc\wx^3}\, Q(\wx u)\,.
\end{equation}
The phase space of the ODE (\ref{ode_u}) is one-dimensional (the $u$-axis).
Because $u'$ appears squared, a soliton solution of the ODE is obtained through standard phaseline analysis by {\em connecting a double root of the quartic polynomial $uB(u)$ on the right side of the ODE with a simple root of $uB(u)$.}  A solution connecting these consecutive roots is possible provided that the interval between the two roots does not contain the singular value $u=1/3$; we call this the {\bfseries\itshape connectivity condition}.  In addition, the {\bfseries\itshape sign condition} requires the positivity of the right side of the equation over the interval between the two roots; it can be expressed~as
\begin{equation}\label{sign_condition}
 -\wc(uB(u))''|_{u=\text{double root}}\ge 0.
 \qquad
 \text{(sign condition)}
\end{equation}

The solution $u(x)$ approaches the double root exponentially slowly as $x$ approaches $\pm\infty$; thus the double root signifies the far-field amplitude of the soliton and is thus equal to $u_\infty$.  The simple root $u_0$ is the extremal value (maximum or minimum) of $u(x)$ and is attained at a finite value $x_*$.  The solution $u(x)$ is symmetric about $x_*$, and has local quadratic behavior there.
Because of the invariance of the polariton system under a shift $x\mapsto x-x_*$, we will henceforth take $x_*=0$.

Since $uB(u)$ vanishes at $u=0$, it is guaranteed that the interval between two consecutive roots is either positive or negative, so that the solution $u(x)$ is of one sign.  This allows one to choose the sign of $g$ appropriately so that
$(\wx/g)u>0$ and $\excitonenv=\pm\sqrt{(\wx/g)u\,}$.  If the simple root $u_0$ is nonzero, then the exciton envelope field is symmetric and of one sign,
\begin{equation*}
  \excitonenv(x) \,=\, \pm\sqrt{\frac{\wx}{g}\,u(x)}\,,
  \qquad
  \text{if }\;\; u_0 \not= 0\,.
\end{equation*}
If the simple root $u_0$ is equal to zero, then the exciton envelope field is anti-symmetric,
\begin{equation*}
  \excitonenv(x) \,=\, \pm\, \mathrm{sgn}(x) \sqrt{\frac{\wx}{g}\,u(x)}\,,
  \qquad
  \text{if }\;\; u_0 = 0\,.
\end{equation*}

\bigskip

\noindent
{\bfseries 1. Dark solitons.}  These are solutions that connect a double nonzero root of $uB(u)$, (far-field value) with the zero root (nadir).

The following factorization is key to the analysis;
 \begin{equation}\label{factorization}
B(u)=(u-u_\infty)^2(u-u_0), \ \ \ \ \begin{cases}u_\infty=1-\frac{1}{\eta}\\ u_0=\frac{1}{2\eta}\end{cases},
 \ \ \ \ \tilde K=-u^2_\infty u_0.\\
\end{equation}
One observes that the singular value $u=1/3$ of the ODE lies between the double root $u_\infty$ and the nonzero simple root $u_0$ and thus obstructs a soliton connection between them:
\begin{equation}
 \begin{cases} u_\infty
 =\frac{1}{3}+2\left(\frac{1}{3}-\frac{1}{2\eta}\right),\\
u_0=\frac{1}{3}-\left(\frac{1}{3}-\frac{1}{2\eta}\right).
 \end{cases}
\end{equation}
In order for $u_\infty$ to be connectible to the other root $u=0$ of $uB(u)$, the following two mutually equivalent conditions must hold:
\begin{equation}\label{connectivity1}
 u_\infty<\textstyle\frac{1}{3}, \quad\text{i.e.,}\quad 0<\eta<\textstyle\frac{3}{2}.
\qquad
\text{(connectivity condition)}
\end{equation}
Furthermore, one obtains from  \eqref{sign_condition}, the sign condition
\begin{equation}\label{sign_condition_red}
 -\wc u_\infty(u_\infty-\textstyle\frac{1}{3})>0.
 \qquad
 \text{(sign condition)}
\end{equation}
As a result of the positivity of $\eta$, the frequencies $\wx$ and $\wc$ must have the same sign. There are therefore two cases $(\wx<0, \ \wc<0)$ and $(\wx>0, \ \wc>0)$.

In the case that $\wx$ and $\wc$ are both negative, or $\omega<\min\{\ox, \oc\}$, conditions (\ref{connectivity1}, \ref{sign_condition_red}) necessitate $u_\infty<0$, which by $u_\infty=1-\eta^{-1}$ is equivalent to $0<\eta<1$.  This yields the band
\begin{equation}
  \{ 0<\eta<1, \;\, \omega<\min\{\ox, \oc\}\}.  \qquad\text{(band 1.1)}
\end{equation}
Thus for each $\eta$ between $0$ and $1$ the ODE (\ref{ode_u}) has a soliton solution $u(x)$ with nadir equal to the simple root $u=0$ and far-field value equal to the double root $u_\infty$.

When converting $\eta$ back to the variable $\omega$ through $\eta=\wx\wc/\gamma^2$, the condition $\omega<\min\{\ox, \oc\}$ determines the choice of frequency $\omega$ as the lower of the two solutions of
$\eta\gamma^2=(\omega-\ox)(\omega-\oc)$.  This results in band 1.1 stated in Theorem~\ref{thm:solitons} (see \ref{bands1}) and depicted in Fig.~\ref{fig:band1}.  The interval $0<\eta<1$ corresponds to the condition $0<\wx\wc<\gamma^2$ in~(\ref{bands1}).

In converting $u$ back to the variable $\zeta$, notice that $u_\infty$ and $\wx$ are both negative so that $\zeta_\infty=\wx u_\infty>0$, and thus $\zeta(x)=\wx u(x)>0$ for all $x$ since $u$ is of one sign on the interval~$(u_\infty,0)$.
The sign of $g$ is determined by $0<\zeta=g\excitonenv^2$, so $g>0$ for these solutions.  The far-field value (\ref{farfield1}) is obtained from the expression (\ref{zetainfty}) for $\zeta_\infty$.

In the case that $\wx$ and $\wc$ are both positive, or $\omega> \max\{\ox, \oc\}$,
conditions (\ref{connectivity1}, \ref{sign_condition_red}) necessitate $0<u_\infty<\frac{1}{3}$, which by $u_\infty=1-\eta^{-1}$ is equivalent to $1<\eta<\frac{3}{2}$.   This yields the band
\begin{equation}
  \{ 1<\eta<\textstyle\frac{3}{2}, \;\, \omega>\max\{\ox, \oc\}\}.  \qquad\text{(band 1.2)}
\end{equation}
This results in band 1.2 stated in Theorem~\ref{thm:solitons} (see \ref{bands1}) and depicted in Fig.~\ref{fig:band1}.  The interval $1<\eta<\frac{3}{2}$ corresponds to the condition $\gamma^2<\wx\wc<\frac{3}{2}\gamma^2$ in~(\ref{bands1}).
Since again $\zeta_\infty=\wx u_\infty>0$, one has $g\excitonenv(x)^2=\zeta(x)>0$ for all $x$, and again $g>0$.

\bigskip

\noindent
{\bfseries 2. Bright solitons.}  These are solutions that connect zero as a double root of $uB(u)$ (far-field value) to a simple root (peak).

Assuming $uB(u)$ has a double root at $0$, \eqref{ode_u} takes the form
\begin{equation}\label{ode_u2}
 (3u-1)^2u'^2=-8\wc u^2\underbrace{\left[u^2-\left(2-\textstyle\frac{3}{2\eta}\right)u+1-\textstyle\frac{1}{\eta}
\right]}_{{\text{Quadratic }} A(u)},  \ \ \ \text{when $\wx\ne 0$}.
\end{equation}
We are interested in the ranges of $\eta$ for which roots of the quadratic $A(u)$  are to the left of the singular point $u=\textstyle\frac{1}{3}$ and thus can be connected with the double root at $u=0$ (connectivity condition). 
By a simple argument,$\footnote{\mbox{Write 
$ (u-1)^2=-\textstyle\frac{3}{2\eta}(u-\textstyle\frac{2}{3})$ and examine how the line cuts the quadratic, as $\eta$ is varied.}}$ there is either one root $u_0$ or no root in the  half-line $u<\textstyle\frac{1}{3}$. The range of $\eta$ for which this root is present is 
\begin{equation}
 0<\eta<\textstyle\frac{9}{8}, \ \ \ \
\begin{cases}0<u_0<\textstyle\frac{1}{3}, 
\ \ \ \ \ \text{when $1<\eta<\textstyle\frac{9}{8}$}, \\
u_0<0, \ \ \ \ \ \ \ \text{when $0<\eta<1$}.
\end{cases}
\end{equation}
The other root is above $\frac{1}{3}$, so one computes
\begin{equation}\label{u0}
  \textstyle u_0 = 1-\frac{1}{\eta}\left[ \frac{3}{4} + \sqrt{\frac{9}{16} - \frac{\eta}{2}\,}\, \right].
\end{equation}
As $\eta$ decreases from $\eta=\textstyle\frac{9}{8}$ to  $\eta=1$, the root $u_0$ descends from $u=\textstyle\frac{1}{3}$ to $u=0$, then turning negative as $\eta$ decreases from the value $1$. In the limit $\eta\to +0$, $u_0\to -\infty$. 

The condition $\eta>0$ implies $\wx\wc>0$, so that, as before, either $(\wx<0, \ \wc<0$) or $(\wx>0, \ \wc>0)$.  We must consider these cases in conjunction with the sign condition discussed above, which requires the right side of \eqref{ode_u2} to be positive in the neighborhood of the double root $u=0$,
\begin{equation}\label{sign_condition2}
 \wc(1-\textstyle\frac{1}{\eta})<0. 
 \qquad
 \text{(sign condition)}
\end{equation}
Putting these requirements together, we obtain two soliton bands,
\begin{equation}
 \begin{cases}
 \text{band 2.1}
=\{ 1<\eta<\textstyle\frac{9}{8}, \  \wx<0, \ \wc<0 \}\\ 
  u_0>0, \ \ \ \ \z_0=\wx u_0<0, \ \ \ \ g<0,
\end{cases}
\end{equation}
\begin{equation}
\begin{cases}
\text{band 2.2}=\{0<\eta<1,  \  \  \wx>0, \ \wc>0 \}\\
  u_0<0, \ \ \ \ \z_0=\wx u_0<0, \ \ \ \ g<0.
  \end{cases}
\end{equation}
As in the previous case of dark solitons, the endpoints of the frequency bands are imposed by the bounds of $\eta$ and the relation $\eta\gamma^2=\wx\wc$, and the sign of $g$ coincides with the sign of~$\zeta$, which is negative in these cases.

The minimal (negative) value $\zeta_0$ of $\zeta(x)=\wx u(x)$ is equal to $\wx u_0$, which from (\ref{u0}) is equal~to
\begin{equation}
  \textstyle
  \zeta_0 \,:=\, \wx u_0 \,=\, \wx - \frac{\gamma^2}{\wc}
                 \left[ \frac{3}{4} + \sqrt{\frac{9}{16} - \frac{\wx\wc}{2\gamma^2}\,}\, \right],
\end{equation}
and since $g<0$, one obtains the peak value of $|g|\excitonenv^2$ stated in the theorem.

\bigskip
\noindent
{\bfseries 3. Discontinuous solitons.}\label{discontinuoussolitons}
In the derivation of the dark solitons, the case $u_\infty>\textstyle\frac{1}{3}$ was excluded because the  connectivity of $u_\infty$ to the zero root was broken by the singularity at $u=\textstyle\frac{1}{3}$.   Consider instead the $u$-interval between $u_\infty$ and $u=1$, which does not contain $1/3$ or any root (besides $u_\infty$) of $B(u)$.  The corresponding $\zeta$-interval connects $\zeta_\infty$ an $\wx$ and does not contain $\wx/3$ or any other roots (besides $\zeta_\infty$) of $\zeta Q(\zeta)$.

This $\zeta$-interval corresponds to two $\excitonenv$-intervals, connecting $\pm\excitonenv^\infty$ with $\pm\phi_0:=\pm\sqrt{\wx/g}$ and not containing $\pm\sqrt{\wx/(3g)\,}$, where $\excitonenv^\infty>0$ is defined through $g(\excitonenv^\infty)^2=\zeta_\infty$.  Naturally, $g$ must take the sign of $\zeta_\infty$, and thus the cubic (\ref{phirelation}) giving $\photonenv$ as a function of $\excitonenv$ vanishes when $\excitonenv=\pm\phi_0$. 

Given that the sign condition holds, a discontinuous soliton is constructed by taking a solution of (\ref{zetaODE}) for which $\zeta(x)$ travels from $\phi_0$ to $\excitonenv^\infty$ as $x$ travels from $0$ to $\infty$, then setting
\begin{eqnarray*}
  \excitonenv(x) = \sqrt{\zeta(x)/g\,} &\text{for}& x>0 \\
  \excitonenv(x) = -\sqrt{\zeta(-x)/g\,} &\text{for}& x<0 \\
  \photonenv(x) =  \gamma^{-1}\excitonenv\left(\wx-g\excitonenv^2\right) &\text{for}& x\in\mathbb{R}.
\end{eqnarray*}
The polariton field $\left(\excitonenv(x),\photonenv(x)\right)$ is antisymmetric about $x=0$ and satisfies the pair (\ref{envelopeODE1a},\ref{envelopeODE1b}) for $x\not=0$. 
Since $\photonenv$ vanishes when $\excitonenv=\pm\phi_0$, setting $\photonenv(0)=0$ makes the field $\photonenv(x)$ continuous; this together with antisymmetry makes $\photonenv(x)$ continuously differentiable at $x=0$.  Thus (\ref{envelopeODE1b}) is satisfied in the sense of distributions, even through $x=0$, and the jump of $\photonenv''(x)$ across $x=0$ is computed from the ODE:
\begin{equation}\label{jump}
  [\photonenv''(x)]_{x=0} \,=\, 2\gamma [\excitonenv(x)]_{x=0} \,=\, 4\gamma \sqrt{\wx/3\,}\,.
\end{equation}

Violation of the connectivity condition means
\begin{equation}\label{noconnectivity1}
  u_\infty>\textstyle\frac{1}{3},
  \quad\text{i.e.,}\quad
  \eta>\textstyle\frac{3}{2} \;\;\text{or}\;\; \eta<0.
\qquad
\text{(no-connectivity condition)}
\end{equation}
The sign condition \eqref{sign_condition_red} still applies and reduces to 
\begin{equation}
 \wc<0,
\end{equation}
as a result of $u_\infty>\textstyle\frac{1}{3}$. The sign of $\wx$ is opposite to the sign of $\eta$, as follows from the definition  of $\eta$ \eqref{definition_eta}.
From the relation $u_\infty=1-\eta^{-1}$, we obtain two frequency bands, one for $\eta>3/2$, and one for $\eta<0$.  For $\eta>3/2$,
\begin{equation}
 \begin{cases} \mbox{band 3.1}
=\{\eta>\textstyle\frac{3}{2}, \  \omega<\min\{\ox, \oc\} \}\\ 
  u_\infty>\textstyle\frac{1}{3}, \ \ \ \z_\infty=\wx u_\infty<0, \ \ \ \ g<0.
\end{cases}
\end{equation}
In this band, the far-field value of $u$ is $u_\infty=1-\eta^{-1}$, so the range of $u(x)$ is $(u_\infty,1)$ and thus $\zeta(x)=\wx u(x)$ has a far-field value of $\wx-\gamma^2/\wc$ and, since $\wx<0$, its range is equal to negative interval $(\wx,\zeta_\infty)$.  The corresponding discontinuous soliton is bright since $|\zeta_\infty|<|\wx|$.

In the case $\eta<0$, one obtains
\begin{equation}
 \begin{cases} \mbox{band 3.2}
=\{\eta<0, \ \ox< \omega<\oc \}\\ 
  u_\infty>\textstyle\frac{1}{3}, \ \ \ \z_\infty=\wx u_\infty>0, \ \ \ \ g>0.
\end{cases}
\end{equation}
The far-field amplitude of $u$ is again $u_\infty=1-\eta^{-1}$, so the range of $u(x)$ is $(1,u_\infty)$  Since $\wx>0$, the range of $\zeta(x)$ is the positive interval $(\wx,\zeta_\infty)$.

\section{Proof of Theorem~\ref{thm:stability}: far-field stability}\label{sec:proofstability}

This section is devoted to a proof of Theorem~\ref{thm:stability}, using the notation introduced there.
The determinant $D(\infty,\tau)$ (\ref{dispersion2}) with $\tau=is$ is
\begin{equation}\label{dispersion1}
 D \,:=\, s^4+(\asq+\bsq+2\gsq-\zsq)\ssq+\g^4-2\aaa\bb\gsq+\asq\bsq-\zsq\bsq,
\end{equation}
with $\zeta=0$ or $\zeta=\zeta_\infty$.
{\em Linear stability at all modes $k$ requires that the two roots of the determinant $D$, considered as a quadratic in the variable $\ssq$, be negative or zero for all real values of $k$, {\it i.e.}  for all values of $\bb$ that satisfy $\bb\ge-\wc$.}
This is equivalent to the following three conditions:
\begin{enumerate}
 \item The product of the roots is positive or zero
\begin{equation}\label{positive_root_product}
  \g^4-2\aaa\bb\gsq+\asq\bsq-\zsq\bsq\ge 0, \ \ \ \text{for all $\bb\ge-\wc$.}
\end{equation}
\item Their sum  of the roots is negative or zero
\begin{equation}\label{negative_root_sum}
 \asq+\bsq+2\gsq-\zsq\ge 0, \ \ \ \text{for all $\bb\ge-\wc$.}
\end{equation}
\item The discriminant is positive or zero
\begin{equation}\label{positive_discriminant}
 (\asq+\bsq+2\gsq-\zsq)^2- 4(\g^4-2\aaa\bb\gsq+\asq\bsq-\zsq\bsq)\ge 0, \ \ \ \text{for all $\bb\ge-\wc$.}
\end{equation}
\end{enumerate}
Inequality \eqref{positive_discriminant}  is the hardest of the three conditions to analyze. Through algebraic manipulation, it is recast as
\begin{equation}\label{stability_condition_3a}
(\asq-\bsq-\zsq)^2+4\gsq(\asq+\bsq+2\aaa\bb-\zsq)\ge0,    \ \ \ \text{for all $\bb\ge-\wc$}.
\end{equation}

The three inequalities together constitute {\it necessary and sufficient conditions} for the asymptotic values  $\z=0$ or $\z=\z_\infty$ of a soliton solution to be a  {\it linearly stable} homogeneous solution. We refer to these inequalities  below as the first, second, and third stability conditions. 

\subsection{Stability of the far-field solution $\z=0$.} 

The left side of each of the three inequalities above is  either a perfect square or a sum of squares (see the third inequality in its recast form~(\ref{stability_condition_3a})).  Thus, they are all satisfied, and so the soliton far-field solutions for bands 2.1 and 2.2 are stable.

\subsection{Stability of the far-field solution $\z=\z_\infty$.}

\subsubsection*{First stability condition.} 

For $\z=\z_\infty$, the left side of the inequality \eqref{positive_root_product} factors to 
\begin{equation}
 \left( \gsq-(\aaa_\infty+\z_\infty)\bb \right) \left( \gsq-(\aaa_\infty-\z_\infty)\bb \right) \ge 0, 
\end{equation}
in which $\aaa_\infty$ is the value of $\aaa$ at $\z=\z_\infty$. 
The definition of $\eta$ gives directly 
\begin{equation}\label{homogeneous_values}
 \wx=\textstyle\frac{\eta\gsq}{\wc}, 
\end{equation}
from which one obtains easily
\begin{equation}
\ \ \ \aaa_\infty+\z_\infty=\textstyle\frac{\gsq}{\wc}(2\eta-3),
\ \ \ \aaa_\infty-\z_\infty=-\textstyle\frac{\gsq}{\wc}.  
\end{equation}
Inserting these into the above inequality and recalling that  $\bb=k^2-\wc$, yields 
\begin{equation}\label{alpha_pm_beta}
\left[\left(\eta-\textstyle\frac{3}{2}\right) \textstyle\frac{k^2}{\wc}+\eta-1)\right]\textstyle\frac{k^2}{\wc}\ge 0 \ \ \ \ \ \ \text{for all real $k$.}
\end{equation}
The sign distribution of the left of the inequality 
reveals that the inequality is satisfied in exactly two regimes
\begin{equation}\label{stability1_regimes_eta}
 \begin{cases} \eta_\text{min}\le\eta\le 1, \ \ \  \wc<0 \\
 1\le\eta\le\textstyle\frac{3}{2}, \ \ \ \ \ \ \  \wc>0.
 \end{cases}
\end{equation}

The homogeneous solutions corresponding to the far-field values of the solitons in bands 1.1 and 3.2 are in the first regime.  Those corresponding to band 1.2 are in the second regime. Thus, the far-field values of all these dark solitons pass the first test for linear stability.  On the other hand, the homogeneous solutions corresponding to the far-field values of the solitons in band 3.1 are outside these two regimes and therefore are not linearly stable.

\medskip

The first stability condition \eqref{positive_root_product} poses a simple restriction  on the homogeneous solutions  $\z=\z_\infty$, as one observes that it  is a quadratic inequality in the variable $\beta/\gsq$,
\begin{equation}\label{beta_quadratic}
(\asq-\zsq)\Biggl(\frac{\bb}{\gsq}\Biggr)^2
-2\aaa \Biggl(\frac{\bb}{\gsq}\Biggr)+1\ge 0,  \ \ \ \text{for all $\bb\ge-\wc$.}
\end{equation}
Necessarily, 
\begin{equation}\label{alphasqgezsq}
 \asq-\zsq\ge0. 
\end{equation}
This is an interesting inequality. 
Factoring and recalling that $\aaa=2\z-\wx$,  it becomes,
\begin{equation}\label{rootcondition1}
 (\zeta-\wx)(\z- \frac{1}{3}\wx)\ge0
\end{equation}
Thus, stable homogeneous solutions takes  values $\z=\z_\infty$ that {\it  lie outside the open interval between $\wx$ and $\wx/3$}. To the {\it right} of this interval $\aaa>0$, while   $\aaa<0$ holds when $\z$ is to the {\it left} of the interval.

\subsubsection*{Second stability condition.} 

Inequality in  \eqref{negative_root_sum} follows immediately from the obtained  requirement of the first stability condition, $\asq-\zsq\ge0$.

\subsubsection*{Third stability condition.}

The far-field solutions for bands 1.1 and 3.2 satisfy the first two stability conditions.  We show now that they also satisfy the third condition.  It suffices to show that the second term in parentheses (call it $A$) in \eqref{stability_condition_3a} is positive or zero.  The proof is based on the fact that both solutions have $\wc<0$.
\begin{equation}
A= (\aaa+\bb)^2-\zsq=(\aaa+\bb+\z)(\aaa+\bb-\z). 
\end{equation}
Inserting the expression \eqref{alpha_pm_beta} for $\aaa\pm\bb$, we obtain 
\begin{equation}
 A=(\textstyle\frac{\gsq}{\wc}(2\eta-3)+k^2-\wc)(-\textstyle\frac{\gsq}{\wc}+k^2-\wc). 
\end{equation}
The term $2\eta-3$ is positive or zero by \eqref{stability1_regimes_eta}. With $\wc<0$, every term in each of the two parenthesis is positive or zero.

The expression for $A$ above is quadratic in $k^2$ with roots $-\textstyle\frac{\gsq}{\wc}(2\eta-3)+\wc$ and $\textstyle\frac{\gsq}{\wc}+\wc$.   For the far-field solutions for band 1.2, necessarily $\wc>0$, and thus both roots are positive.  Giving $k^2$ a value between these roots makes the quadratic expression negative.  The third stability condition is thus violated, so the far-field solution for band 1.2 is unstable.  Table~\ref{table:stability} gives a summary of linear far-field stability of all solitons.

\begin{center}
\begin{minipage}{0.7\textwidth}
\captionof{table}{Soliton properties}\label{stab_table_z_pos}
\begin{tabular}{| c | l | c | c |}
\hline
Band &  Bright/Dark &  Linearly stable far field & $\eta$ domain
\\ \hline\hline
3.1 & Neither & no & $\hspace{1.6em} \frac{3}{2}<\eta<\infty  $
\\ \hline
2.1 &  Bright & yes & $\hspace{1.3em} 1<\eta<\textstyle\frac{9}{8}$
\\ \hline
1.1 & Dark & yes & $\hspace{1.3em} 0<\eta<1$ 
\\ \hline
3.2  & Dark  & yes &  $\eta_\text{min}<\eta<0$
\\ \hline
2.2  & Bright  & yes &  $\hspace{1.3em} 0<\eta<1$
\\ \hline
1.2  & Dark & no &  $\hspace{1.3em} 1<\eta<\textstyle\frac{3}{2}$
\\ \hline
\end{tabular}
\label{table:stability}
\smallskip
\end{minipage}
\end{center}

\section{Concluding Discussion}

We have studied soliton solutions in a polariton condensate and have derived the complete spectrum of static one-dimensional solitons.
The stationarity property (harmonic with a non-traveling envelope) permits a reduction of the polariton equations 
to a real first-order ordinary differential equation.  
This allows symbolic integration of the polariton equations, resulting in exact analytical formulae for stationary polariton solitons.
For attractive exciton-exciton interactions we find two bands of bright solitons while for repulsive interactions
we find two bands of dark solitons.
In addition, a band of dark solitons with a discontinuous exciton field at the soliton center (discontinuous solitons) 
is found for attractive interactions and a band of discontinuous bright solitons with a nonzero background field 
is found for attractive interactions.
One-dimensional solitons have been shown to be realizable in a polariton waveguide
through detuning of the microcavity in the $x_2$ (transverse) direction \cite{EgorovSkryabinYulin2009}.

The system of two equations for the exciton and photon wavefunctions can, in general, not be reduced to the
Gross-Pitaevskii model for a single wavefunction representing polaritons.
A reduction is possible in certain regimes, and has been given in \cite[section~III]{KomineasShipmanVenakides2015}, where it is shown to apply at the left end of band 1.1 of dark solitons for $g>0$ (and, more generally, where $\zeta\approx\gamma^2$).  
Modeling the full range of solitons requires the system of two equations.  Specifically, for the bands of discontinuous solitons, one field vanishes where the other one is nonzero, and this phenomenon obviously lies outside the parameter regime of validity of the Gross-Pitaevskii model.

The six bands of solitons we discover are the static members of presumably much larger classes of solutions of the one-dimensional polariton system that include traveling and forced solitons.
For example, one-dimensional stable traveling bright solitons for $g>0$, sustained by an optical source, are reported in \cite{EgorovSkryabinYulin2009}.
Finding traveling soliton solutions of the polariton equations, even in one spatial dimension, is not a simple matter.
This is because the exciton and polariton wavefunctions in \eqref{traveling_solitons} are in general complex, resulting in a fourth-order system of real ODEs. 

Two-dimensional polaritons exhibit an abundance of interesting phenomena that promise some mathematical challenges.  
Unlike the nonlinear Schr\"odinger equation, the polariton system (\ref{2Dpolaritons}) with $\kappa_X=\kappa_C=0$ is not invariant under the Galilean transformation
\begin{equation}
  (\exciton(\mathbf{x},t),\photon(\mathbf{x},t)) \,\mapsto\, (\exciton(\mathbf{x}\!-\!\boldsymbol{\xi}t,\,t),\photon(\mathbf{x}\!-\!\boldsymbol{\xi}t,\,t))e^{i(\mathbf{k}\cdot\mathbf{x}-\omega t)},
\end{equation}
in which $\boldsymbol{\xi}={\textstyle\frac{\hbar}{m_C}}\mathbf{k}$ and $\omega = {\textstyle\frac{\hbar}{2m_C}}|\mathbf{k}|^2$.
The upper left entry of the matrix in (\ref{2Dpolaritons}) gains a transport term $-i\,\boldsymbol{\xi}\cdot\nabla\exciton$ and a frequency shift $\ox\mapsto \ox - \omega$.  The scaling transformation
\begin{equation}
  (\exciton(\mathbf{x},t),\photon(\mathbf{x},t)) \,\mapsto\, \lambda (\exciton(\lambda^2t,\lambda\mathbf{x}),\photon(\lambda^2t,\lambda\mathbf{x}))\,,
\end{equation}
which preserves the cubic nonlinear Schr\"odinger equation, effects a transformation of the polariton equations through a scaling of the frequency parameters,
\begin{equation}
  (\ox,\,\oc,\,\gamma) \,\mapsto\, \lambda^2(\ox,\,\oc,\,\gamma)\,.
\end{equation}
The result is a simple scaling by $\lambda^2$ of both the $\omega$ and the $\zeta$ axes in the depiction of the band structure of solitons in Fig.~\ref{fig:bands}.

\vspace{5ex}

\noindent
{\bfseries Acknowledgment.}
This work was partially supported by the European Union's FP7-REGPOT-2009-1 project 
``Archimedes Center for Modeling, Analysis and Computation'' (grant agreement n.\,245749) and by the US National Science  Foundation under grants NSF DMS-0707488 and NSF DMS-1211638.
We acknowledge discussions on the physics and experimental aspects of polariton condensates with P. Savvidis, G. Christmann, F. Marchetti.

\vspace{5ex}

\bibliography{KSV}

\end{document}